\newcounter{theorem}
\renewcommand\thetheorem{\arabic{section}.\arabic{theorem}}
\newenvironment{theorem}{\par\medskip\noindent\begingroup{\bf Theorem
             \stepcounter{theorem}\thetheorem.}\ \itshape
             \def\@currentlabel{\thetheorem}}{\endgroup\par\medskip}
\newenvironment{proposition}{\par\medskip\noindent\begingroup{\bf Proposition
             \stepcounter{theorem}\thetheorem.}\ \itshape
             \def\@currentlabel{\thetheorem}}{\endgroup\par\medskip}
\newenvironment{remark}{\par\medskip\noindent\begingroup{\bf Remark
             \stepcounter{theorem}\thetheorem.}\
             \def\@currentlabel{\thetheorem}}{\endgroup\par\medskip}
\newenvironment{proof}{\par\noindent{\bf Proof.} }{\proofbox\par\medskip}
\def\proofbox{\hfill{\ensuremath\Box}}
\newcommand{\appeqn}{\setcounter{equation}{0}\renewcommand{\theequation}
        {\mbox{A.\arabic{equation}}}}
\newcommand{\apptitle}
{\renewcommand{\thesection}{Appendix}}
\newcommand{\thank}
{\renewcommand{\thesection}{Acknowledgements}}
\newdimen\LENB \newdimen\LENW \newdimen\THI
\newdimen\LENWH \newdimen\LENTOT \newcount\N
\def\vbrknlnele#1#2#3{
  \LENB=#1pt \LENW=#2pt \THI=#3pt
  \LENWH=\LENW \divide\LENWH by 2
  \LENTOT=\LENB \advance\LENTOT by \LENW
  \vbox to \LENTOT{
    \vbox to \LENWH{}
    \nointerlineskip
    \vbox to \LENB{\hbox to \THI{\vrule width \THI height \LENB}}
    \nointerlineskip
    \vbox to \LENWH{}
  }}
\def\vbrknln#1{
  \N=#1
  \vcenter{
    \vbox{
      \loop\ifnum\N>0
        \vbox to 4pt{\vbrknlnele{2}{2}{0.1}}
        \nointerlineskip
        \advance\N by -1
      \repeat
  }}}
\def\hbrknlnele#1#2#3{
  \LENB=#1pt \LENW=#2pt \THI=#3pt
  \LENTOT=\LENB \advance\LENTOT by \LENW
  \vcenter{
    \vbox to \THI{
      \hbox to \LENTOT{
        \hfil
        \vrule width \LENB height \THI
        \hfil}
  }}}
\begin{document}

\begin{frontmatter}
\title{Complex short pulse and coupled complex short pulse equations}
\author[BF]{Bao-Feng Feng}
\address[BF]{
Department of Mathematics \\ The University of Texas-Pan
American\\  Edinburg, TX, 78541-2999, USA}


\begin{abstract}
In the present paper, we propose a complex short pulse equation and a coupled complex short
equation to describe ultra-short pulse propagation in optical fibers.
They are integrable due to the existence of Lax pairs and infinite number of conservation laws.
Furthermore, we find their multi-soliton solutions in terms of pfaffians by virtue of Hirota's bilinear
method. One- and two-soliton solutions are investigated in details, showing favorable properties
in modeling ulta-short pulses with a few optical cycles. Especially, same as the coupled nonlinear Schr\"odinger equation, there is an interesting phenomenon of energy redistribution in soliton interactions. It is expected that, for the ultra-short pulses, the complex and coupled complex short pulses equation will play the same roles as the nonlinear Schr\"odinger equation and coupled nonlinear Schr\"odinger equation.
\vspace{1cm}

{\sl Keywords}: Complex short pulse equation;
 Coupled complex short pulse equation; Hirota bilinear method;
 Pfaffian; Envelope soliton; Soliton interaction.
\end{abstract}
\end{frontmatter}



\section{Introduction}
The nonlinear Schr\"odinger (NLS) equation, as one of the universal
equations that describe the evolution of slowly varying packets of
quasi-monochromatic waves in weakly nonlinear dispersive media, has been
very successful in many applications such as nonlinear optics and water
waves \cite{Kodamabook, Agrawalbook,Boydbook,Yarivbook}. The NLS equation is
integrable, which can be solved by the inverse scattering transform \cite%
{Zakharov}.

However, in the regime of ultra-short pulses where the width of optical
pulse is in the order of femtosecond ($10^{-15}$ s), the NLS equation
becomes less accurate \cite{Rothenberg}. Description of ultra-short
processes requires a modification of standard slow varying envelope models
based on the NLS equation. There are usually two approaches to meet this
requirement in the literature. The first one is to add several higher-order
dispersive terms to get higher-order NLS equation \cite{Agrawalbook}. The
second one is to construct a suitable fit to the frequency-dependent
dielectric constant $\epsilon(\omega)$ in the desired spectral range.
Several models have been proposed by this approach including the short-pulse (SP) equation \cite{SPE_Org,KimPRL,KimPRA,Bandelow}.

Recently, Sch\"{a}fer and Wayne derived a so-called short pulse (SP)
equation \cite{SPE_Org}
\begin{equation}
u_{xt}=u+ \frac 16 \left(u^{3}\right)_{xx}\,,  \label{SPE}
\end{equation}%
to describe the propagation of ultra-short optical pulses in nonlinear
media. Here, $u=u(x,t)$ is a real-valued function, representing the
magnitude of the electric field, the subscripts $t$ and $x$ denote partial
differentiation. Apart from the context of nonlinear optics, the SP equation
has also been derived as an integrable differential equation associated with
pseudospherical surfaces \cite{Robelo}. The SP equation has been shown to be
completely integrable \cite{Robelo,Beals,Sakovich,Brunelli1,Brunelli2}. The
periodic and soliton solutions of the SP equation were found in \cite%
{Sakovich2,Kuetche,Parkes}. The connection between the SP equation and the
sine-Gordon equation through the hodograph transformation was clarified, and
then the $N$-soliton solutions including multi-loop and multi-breather ones
were given in \cite{Matsuno_SPE,Matsuno_SPEreview} by using the Hirota
bilinear method \cite{Hirota}. The integrable discretization of the SP
equation was studied in \cite{SPE_discrete1}, the geometric interpretation
of the SP equation, as well as its integrable discretization, was given in
\cite{SPE_discrete2}. The higher-order corrections to the SP equation was
studied in \cite{Schafer13} most recently.

Similar to the case of the NLS equation \cite{Manakov1974}, it is necessary
to consider its two-component or multi-component generalizations of the SP
equation for describing the effects of polarization or anisotropy. As a
matter of fact, several integrable coupled short pulse have been proposed in
the literature \cite%
{PKB_CSPE,Sakovich3,Hoissen_CSPE,Matsuno_CSPE,Feng_CSPE,ZengYao_CSPE}. Most
recently, the bi-Hamiltonian structures for the above two-component SP
equations were obtained by Brunelli \cite{Brunelli_CSPE}.


In the present paper, we propose and study a complex short pulse (CSP)
equation
\begin{equation}
q_{xt}+q+ \frac{1}{2} \left(|q|^{2}q_x \right)_{x}=0\,,  \label{CSP}
\end{equation}%
and its two-component generalization
\begin{eqnarray}
&& q_{1,xt}+q_1+\frac{1}{2} \left((|q_1|^2+|q_2|^2)q_{1,x}\right)_{x}=0\,,
\label{CCSP1} \\
&& q_{2,xt}+q_2+\frac{1}{2} \left((|q_1|^2+|q_2|^2)q_{2,x}\right)_{x}=0\,.
\label{CCSP2}
\end{eqnarray}
As will be revealed in the present paper, both the CSP equation and its
two-component generalization are integrable guaranteed by the existence of
Lax pairs and infinite number of conservation laws. They have $N$-soliton
solutions which can be constructed via Hirota's bilinear method.


The outline of the present paper is organized as follows. In section 2, we
derive the CSP equation and coupled complex short pulse (CCSP) equation from
the physical context. In section 3, by providing the Lax pairs, the
integrability of the proposed two equations are confirmed, and further, the
conservation laws, both local and nonlocal ones, are investigated. Then $N$%
-soliton solutions to both the CSP and CCSP equations are constructed in
terms of pfaffians by Hirota's bilinear method in section 4. In section 5,
soliton-interaction for coupled complex short pulse equation is investigated
in details, which shows rich phenomena similar to the 
coupled nonlinear Schr\"odinger equatoin. In particular, they may undergo
either elastic or inelastic collision depending on the initial conditions.
For inelastic collisions, there is an energy exchange between solitons,
which can allow the generation or vanishing of soliton. The dynamics is more
richer in compared with the single component case. The paper is concluded by
comments and remarks in section 6.

\section{The derivation of the complex short pulse and coupled complex short
pulse equations}

In this section, following the procedure in \cite{Agrawalbook,SPE_Org}, we
derive the complex short pulse equation (\ref{CSP}) and its two-component
generalization that governs the propagation of ultra short pulse packet
along optical fibers.

\subsection{The complex short pulse equation}

We start with a wave equation for electric field
\begin{equation}  \label{E-wave-equation}
\nabla^2 \mathbf{E} -\frac{1}{c^2} \mathbf{E}_{tt} = \mu_0 \mathbf{P}_{tt}\,,
\end{equation}
originated from the Maxwell equation. Here $\mathbf{E} ( \mathbf{r},t)$ and $%
\mathbf{P} ( \mathbf{r},t)$ represent the electric field and the induced
polarization, respectively, $\mu_0$ is the vacuum permeability, $c$ is the
speed of light in vacuum. If we assume the local medium response and only
the third-order nonlinear effects governed by $\chi^{(3)}$, the induced
polarization consists of two parts, $\mathbf{P} ( \mathbf{r},t)=\mathbf{P}%
_{L} ( \mathbf{r},t)+\mathbf{P}_{NL} ( \mathbf{r},t)$, where the linear part
\begin{equation}  \label{P_L}
\mathbf{P}_{L} ( \mathbf{r},t)=\epsilon_0 \int_{-\infty}^{\infty} \chi^{(1)}
(t-t^{\prime })\cdot \mathbf{E} ( \mathbf{r},t^{\prime })\,dt^{\prime }\,,
\end{equation}
and the nonlinear part
\begin{equation}  \label{P_NL}
\mathbf{P}_{NL}( \mathbf{r},t)=\epsilon_0 \int_{-\infty}^{\infty} \chi^{(3)}
(t-t_1,t-t_2,t-t_3)\times \mathbf{E} ( \mathbf{r},t_1) \mathbf{E} ( \mathbf{r%
},t_2) \mathbf{E} ( \mathbf{r},t_3)\,dt_1dt_2dt_3\,.
\end{equation}
Here $\epsilon_0$ is the vacuum permittivity and $\chi^{(j)}$ is the $j$%
th-order susceptibility. Since the nonlinear effects are relatively small in
silica fibers, $\mathbf{P}_{NL}$ can be treated as a small perturbation.
Therefore, we first consider (\ref{E-wave-equation}) with $\mathbf{P}_{NL}=0$%
. Furthermore, we restrict ourselves to the case that the optical pulse
maintains its polarization along the optical fiber, and the transverse
diffraction term $\Delta_{\perp} \mathbf{E}$ can be neglected. In this case,
the electric field can be considered to be one-dimensional and expressed as
\begin{equation}  \label{E-field}
\mathbf{E} = \frac 12 \mathbf{e_1} \left(E(z,t)+c.c. \right)\,,
\end{equation}
where $\mathbf{e_1}$ is a unit vector in the direction of the polarization,
$E(z,t)$ is the complex-valued function, and $c.c.$ stands for the complex
conjugate. Conducting a Fourier transform on (\ref{E-wave-equation}) leads
to the Helmholtz equation
\begin{equation}  \label{Waveequation-Fourier}
\tilde {E}_{zz} (z,\omega) + \epsilon(\omega) \frac{\omega^2}{c^2} \tilde {E}
(z,\omega)=0\,,
\end{equation}
where $\tilde {E} (z,\omega)$ is the Fourier transform of $E(z,t)$ defined
as
\begin{equation}  \label{E_Fourier}
\tilde E (z,\omega)=\int_{-\infty}^{\infty} {\ E} (z,t) e^{\mathrm{i} \omega
t}\,dt\,,
\end{equation}
$\epsilon(\omega)$ is called the frequency-dependent dielectric constant
defined as
\begin{equation}  \label{Dielectric}
\epsilon(\omega)=1+ \tilde \chi^{(1)} (\omega)\,,
\end{equation}
where $\tilde \chi^{(1)} (\omega)$ is the Fourier transform of $\chi^{(1)}
(t)$
\begin{equation}  \label{Chi_Fourier}
\tilde \chi^{(1)} (\omega)=\int_{-\infty}^{\infty} \chi^{(1)} (t) e^{\mathrm{%
i} \omega t}\,dt\,.
\end{equation}
Now we proceed to the consideration of the nonlinear effect. Assuming the
nonlinear response is instantaneous so that $P_{NL}$ is given by $%
P_{NL}(z,t)= \epsilon_0 \epsilon_{NL} E(z,t)$ \cite{Agrawalbook} where the
nonlinear contribution to the dielectric constant is defined as
\begin{equation}  \label{Epsnl}
\epsilon_{NL}= \frac 34 \chi^{(3)}_{xxxx} |E(z,t)|^2\,.
\end{equation}
In this case, the Helmholtz equation (\ref{Waveequation-Fourier}) can be
modified as
\begin{equation}  \label{Helmholtz}
\tilde {E}_{zz} (z,\omega) + \tilde \epsilon(\omega) \frac{\omega^2}{c^2}
\tilde {E} (z,\omega)=0\,,
\end{equation}
where
\begin{equation}  \label{Modified_Dielectric}
\tilde \epsilon(\omega)=1+ \tilde \chi^{(1)} (\omega)+ \epsilon_{NL}\,.
\end{equation}
As pointed out in \cite{SPE_Org,Boydbook,KimPRL}, the Fourier transform $%
\tilde \chi^{(1)}$ can be well approximated by the relation $\tilde
\chi^{(1)}=\tilde \chi_0^{(1)} - \tilde \chi_2^{(1)} \lambda^2$ if we
consider the propagation of optical pulse with the wavelength between 1600
nm and 3000 nm. It then follows that the linear equation (\ref%
{Waveequation-Fourier}) written in Fourier transformed form becomes
\begin{equation}  \label{Wave_Fourier2}
\tilde {E}_{zz} + \frac{1+\tilde \chi_0^{(1)}}{c^2} \omega^2 \tilde {E} -
(2\pi)^2 \tilde \chi_2^{(1)} \tilde {E} + \epsilon_{NL} \frac{\omega^2}{c^2}
\tilde {E} =0\,.
\end{equation}

Applying the inverse Fourier transform to (\ref{Wave_Fourier2}) yields a
single nonlinear wave equation
\begin{equation}  \label{Wave_nonlinear1}
E_{zz} - \frac{1}{c_1^2} E_{tt} = \frac{1}{c_2^2} E +\frac 34
\chi^{(3)}_{xxxx} \left(|E|^2 E\right)_{tt} =0\,.
\end{equation}

Similar to \cite{SPE_Org}, we focus on only a right-moving wave packet and
make a multiple scales \textrm{ansatz}
\begin{equation}  \label{E_ansatz}
E(z,t)=\epsilon E_0(\phi, z_1, z_2, \cdots)+ \epsilon^2 E_1(\phi, z_1, z_2,
\cdots) + \cdots\,,
\end{equation}
where $\epsilon$ is a small parameter, $\phi$ and $z_n$ are the scaled
variables defined by
\begin{equation}  \label{Multiple_ansatz}
\phi= \frac{t-\frac{x}{c_1}}{\epsilon}, \quad z_n=\epsilon^n z\,.
\end{equation}
Substituting (\ref{E_ansatz}) with (\ref{Multiple_ansatz}) into (\ref%
{Wave_nonlinear1}), we obtain the following partial differential equation
for $E_0$ at the order $O(\epsilon)$:

\begin{equation}  \label{Wave_nonlinear2}
- \frac{2}{c_1} \frac{\partial^2 E_0}{\partial{\phi}\partial{z_1}} = \frac{1%
}{c_2^2} E_0 +\frac 34 \chi^{(3)}_{xxxx} \frac{\partial}{\partial{\phi}}
\left(|E_0|^2 \frac{\partial E_0}{\partial {\phi}} \right)\,.
\end{equation}


Finally, by a scale transformation
\begin{equation}  \label{Scaling}
x= \frac{c_1}{2} \phi, \quad t={c_2} z_1, \quad q = \frac{c_1\sqrt{%
6c_2\chi^{(3)}_{xxxx}}}{4} E_0\,,
\end{equation}
we arrive at the normalized form of the complex short pulse equation (\ref%
{CSP}).

\subsection{Coupled complex short pulse equation}

In the previous subsection, a major simplification made in the derivation of
the complex short pulse equation is to assume that the polarization is
preserved during its propagating inside an optical fiber. However, this is
not really the case in practice. For birefringent fibers, two orthogonally
polarized modes have to be considered. Therefore, similar to the extension
of coupled nonlinear Schr{\"o}dinger equation from the NLS equation, an
extension to a two-component version of the complex short pulse equation (%
\ref{CSP}) is needed to describe the propagation of ultra-short pulse in
birefringent fibers. In fact, several generalizations have been proposed for
the short pulse equation \cite%
{PKB_CSPE,Sakovich3,Hoissen_CSPE,Matsuno_CSPE,Feng_CSPE,ZengYao_CSPE}.
Particularly, by taking into account the effects of anisotropy and
polarization, Pietrzyk \textit{et. al.} have derived a general two-component
short-pulse equation from the physical context \cite{PKB_CSPE}. We follow
the approach by Pietrzyk \textit{et. al.} to derive a two-component complex
short pulse equation. However, as shown in subsequent section, the
two-component complex short pulse equation admits multi-soliton solutions
which reveals richer dynamics in soliton interactions in compared with the
real SP equation.

We first consider the linear birefringent fiber such that the electric field
with an arbitrarily polarized optical fiber can be expressed as
\begin{equation}  \label{E-field2}
\mathbf{E} = \frac 12 \left( \mathbf{e_1} E_1(z,t)+ \mathbf{e_2} E_2(z,t)
\right)+ c.c.\,,
\end{equation}
where $\mathbf{e_1}$, $\mathbf{e_2}$ are two unit vectors along positive $x$%
- and $y$-direction in the transverse plane perpendicular to the optical
fiber, respectively, $E_1$ and $E_2$ are the complex amplitudes of the
polarization components correspondingly. Without the presence of nonlinear
polarization ($P_{NL}=0$) and the transverse diffraction, the Fourier
transform converts (\ref{E-wave-equation}) into a pair of Helmholtz
equations
\begin{equation}  \label{CCSPE_Helmholtz1}
\tilde {E}_{1,zz} (z,\omega) + \epsilon(\omega) \frac{\omega^2}{c^2} \tilde {%
E_1} (z,\omega)=0\,,
\end{equation}

\begin{equation}  \label{CCSPE_Helmholtz2}
\tilde {E}_{2,zz} (z,\omega) + \epsilon(\omega) \frac{\omega^2}{c^2} \tilde {%
E_2} (z,\omega)=0\,.
\end{equation}
Same as the scalar case, the frequency-dependent dielectric constant $%
\epsilon(\omega)=1+ \tilde \chi^{(1)} (\omega)$, where $\tilde \chi^{(1)}$
can be well approximated by the relation $\tilde \chi^{(1)}=\tilde
\chi_0^{(1)} - \tilde \chi_2^{(1)} \lambda^2$ for the propagation of optical
pulse with the wavelength between 1600 nm and 3000 nm.

As indicated in \cite{Agrawalbook}, the nonlinear part of the induced
polarization $\mathbf{P}_{NL}$ can be written as
\begin{equation}
\mathbf{P}_{NL}=\frac{1}{2}\left( \mathbf{e_{1}}P_{1}(z,t)+\mathbf{e_{2}}%
P_{2}(z,t)\right) +c.c.\,,  \label{E-field22}
\end{equation}%
where
\begin{equation}
P_{1}=\frac{3\epsilon _{0}}{4}\chi _{xxxx}^{(3)}\left[ \left( |E_{1}|^{2}+%
\frac{2}{3}|E_{2}|^{2}\right) E_{1}+\frac{1}{3}(E_{1}^{\ast }E_{2})E_{2}%
\right] \,,  \label{Nonlinear_Polarization1}
\end{equation}%
\begin{equation}
P_{2}=\frac{3\epsilon _{0}}{4}\chi _{xxxx}^{(3)}\left[ \left( |E_{2}|^{2}+%
\frac{2}{3}|E_{1}|^{2}\right) E_{2}+\frac{1}{3}(E_{2}^{\ast }E_{1})E_{1}%
\right] \,.  \label{Nonlinear_Polarization2}
\end{equation}%
The last term in Eqs. (\ref{Nonlinear_Polarization1}) and (\ref%
{Nonlinear_Polarization2}) leads to the degenerate four-wave mixing. In
highly birefringent fibers, the four-wave-mixing term can often be
neglected. In this case, we arrive at a coupled nonlinear wave equation
\begin{equation}  \label{Coupled_equations1}
E_{1,zz}-\frac{1}{c_{1}^{2}}E_{1,tt}=\frac{1}{c_{2}^{2}}E_{1}+\frac{3}{4}%
\chi _{xxxx}^{(3)}\left[ \left( |E_{1}|^{2}+\frac{2}{3}|E_{2}|^{2}\right)
E_{1}\right] _{tt}\,,
\end{equation}

\begin{equation}  \label{Coupled_equations2}
E_{2,zz}-\frac{1}{c_{1}^{2}}E_{2,tt}=\frac{1}{c_{2}^{2}}E_{2}+\frac{3}{4}%
\chi _{xxxx}^{(3)}\left[ \left( |E_{2}|^{2}+\frac{2}{3}|E_{1}|^{2}\right)
E_{2}\right] _{tt}\,.
\end{equation}

Similar to the scalar case, by a multiple scales expansion and an
appropriate scaling transformation, a couple complex short pulse equation
can be obtained from (\ref{Coupled_equations1})--(\ref{Coupled_equations2})
\begin{equation}  \label{CCSP_linear1}
q_{1,xt}+q_{1}+\frac{1}{2} \left((|q_{1}|^{2}+\frac{2}{3}|q_{2}|^{2})q_{1,x}%
\right) _{x}=0\,,
\end{equation}

\begin{equation}  \label{CCSP_linear2}
q_{2,xt}+q_{2}+\frac{1}{2} \left((|q_{2}|^{2}+\frac{2}{3}|q_{1}|^{2})q_{2,x}%
\right) _{x}=0\,.
\end{equation}

More generally, we can consider the coupled short pulse equation for
elliptically birefringent fibers. In this case, the electric field can be
written as
\begin{equation}
\mathbf{E}=\frac{1}{2}\left( \mathbf{e_{x}}E_{x}(z,t)+\mathbf{e_{y}}%
E_{y}(z,t)\right) +c.c.\,,  \label{E-field3}
\end{equation}%
where $\mathbf{e_{x}}$ and $\mathbf{e_{y}}$ are orthonormal polarization
eigenvectors
\begin{equation}
\mathbf{e_{x}}=\frac{\mathbf{e_{1}}+ir\mathbf{e_{2}}}{\sqrt{1+r^{2}}},\quad
\mathbf{e_{y}}=\frac{r\mathbf{e_{1}}-i\mathbf{e_{2}}}{\sqrt{1+r^{2}}}\,.
\label{Elliptical-unit}
\end{equation}%
The parameter $r$ represents the ellipticity. It is common to introduce the
ellipticity angle $\theta $ as $r=\tan (\theta /2)$. The case $\theta =0$
and $\pi /2$ correspond to linearly and circularly birefringent fibers,
respectively.

Following a procedure similar to the case of linearly birefringent fibers,
one can drive the normalized form for the coupled complex short pulse
equation
\begin{equation}  \label{CCSP_general1}
q_{1,xt}+ q_1+ \frac 12 \left((|q_1|^2+ B|q_2|^2) q_{1,x} \right)_x = 0\,,
\end{equation}
\begin{equation}  \label{CCSP_general2}
q_{2,xt}+ q_2+ \frac 12  \left((|q_2|^2+ B|q_1|^2) q_{2,x} \right)_x = 0\,.
\end{equation}
where the parameter $B$ is related to the ellipticity angle $\theta$ as
\begin{equation}  \label{B_theta}
B=\frac{2+2\sin^2 \theta}{2+\cos^2 \theta}\,.
\end{equation}
For a linearly birefringent fiber ($\theta=0$), $B=\frac 23$, and Eqs. (\ref%
{CCSP_general1})-- (\ref{CCSP_general2}) reduces to Eqs. (\ref{CCSP_linear1}%
)-- (\ref{CCSP_linear2}). For a circularly birefringent fiber ($\theta=\pi/2$%
), $B=2$. In general, the coupling parameter $B$ depends on the ellipticity
angle $\theta$ and can vary from $\frac 23$ to $2$ for values of $\theta$ in
the range from $0$ to $\pi/2$. Note that $B=1$ when $\theta \approx 35^\circ$%
. As discussed in the subsequent section, this case is of particular
interest because the coupled system is integrable and admits $N$-soliton
solution. 
\section{Lax pairs and conservation laws for the complex and
coupled complex short pulse equations}

\subsection{Lax pairs and integrability}
In \cite{PKB_CSPE}, a matrix generalization for the SP equation is given
based on zero-curvature representation, from which the Lax pairs for several
integrable two-component SP equations are explicitly provided. In this
subsection, we will show the integrability of the complex short pulse and
coupled complex short pulse equations by finding their Lax pairs constructed
from another matrix generalization of the SP equation.

The Lax pair for the complex short pulse equation (\ref{CSP}) can be
expressed as
\begin{equation}  \label{comSPE_Laxa}
\Psi_x=U \Psi, \quad \Psi_t=V\Psi\,,
\end{equation}
with
\begin{equation}  \label{comSPE_Laxb}
\displaystyle U= \lambda \left(%
\begin{array}{cc}
1 & q_x \\
q^{*}_x & -1%
\end{array}%
\right), \quad V= \left(%
\begin{array}{cc}
-\frac {\lambda}{2} |q|^2-\frac{1}{4\lambda} & -\frac{\lambda}{2}%
|q|^2q_x+\frac q2 \\
-\frac{\lambda}{2}|q|^2q^{*}_x-\frac {q^{*}}{2} & \frac {\lambda}{2}|q|^2+\frac{1}{4\lambda}%
\end{array}%
\right)\,.
\end{equation}
It can be easily shown that the compatibility condition $U_t-V_x+[U,\,V]=0$
gives the complex short pulse equation (\ref{CSP}).

The Lax pair for the coupled complex short pulse equation (\ref{CCSP1})--(%
\ref{CCSP2}) is found to be of the form:
\begin{equation}
\Psi _{x}=U\Psi ,\quad \Psi _{t}=V\Psi \,,  \label{ccomSPE_Laxa}
\end{equation}%
with
\begin{equation}
U=\lambda \left(
\begin{array}{cc}
I_{2} & Q_{x} \\
R_{x} & -I_{2}%
\end{array}%
\right) ,\quad   V=\left(
\begin{array}{cc}
-\frac{\lambda }{2}QR-\frac{1}{4\lambda }I_{2} & -\frac{\lambda }{2}QRQ_{x}+%
\frac{1}{2}Q \\
-\frac{\lambda }{2}RQR_{x}-\frac{1}{2}R & \frac{\lambda }{2}QR+\frac{1}{%
4\lambda }I_{2}%
\end{array}%
\right) \,,  \label{ccomSPE_Laxb}
\end{equation}%
where $I_{2}$ is a $2\times 2$ identity matrix, $Q$, $R$ are $2\times 2$
matrices defined as
\begin{equation}
Q=\left(
\begin{array}{cc}
q_{1} & q_{2} \\
-q_{2}^{\ast } & q_{1}^{\ast }%
\end{array}%
\right) ,\quad R=\left(
\begin{array}{cc}
q_{1}^{\ast } & -q_{2} \\
q_{2}^{\ast } & q_{1}%
\end{array}%
\right) \,.  \label{ccomSPE_Laxc}
\end{equation}%
Note that $R=Q^{\dag }$, thus,
\begin{equation}
QR=RQ=(|q_{1}|^{2}+|q_{2}|^{2})I_{2}\,,
\end{equation}%
the compatibility condition $U_{t}-V_{x}+[U,\,V]=0$ for (\ref{ccomSPE_Laxa})
gives the coupled complex short pulse equation (\ref{CCSP1})--(\ref{CCSP2}).

As a matter of fact, the coupled complex short pulse equation can be
generalized into a multi-component, or a vector complex short pulse equation
\begin{equation}
q_{i,xt}+q_{i}+\frac{1}{2}\left( |\mathbf{q}|^{2}q_{i,x}\right)
_{x}=0\,,\quad i=1,\cdots ,n,  \label{NCSPE}
\end{equation}%
where $\mathbf{q}=(q_1,q_2, \cdots, q_n)$. The integrability of Eq. (\ref%
{NCSPE}) 
can be guaranteed by the Lax pair constructed in a similar way as in \cite%
{TsuchidaJPSJ}.
\begin{equation}
\Psi _{x}=U\Psi ,\quad \Psi _{t}=V\Psi \,,  \label{vcomSPE_Laxa}
\end{equation}%
with
\[
U=\lambda \left(
\begin{array}{cc}
I_{2^{n-1}} & Q_{x}^{(n)} \\
R_{x}^{(n)} & -I_{2^{n-1}}%
\end{array}%
\right) ,
\]%
\[
V=\left(
\begin{array}{cc}
-\frac{1}{2}Q^{(n)}R^{(n)}-\frac{1}{4\lambda }I_{2^{n-1}} & -\frac{\lambda }{2}%
Q^{(n)}R^{(n)}Q_{x}^{(n)}+\frac{1}{2}Q^{(n)} \\
-\frac{\lambda }{2}R^{(n)}Q^{(n)}R_{x}^{(n)}-\frac{1}{2}R^{(n)} & \frac{1}{2}%
Q^{(n)}R^{(n)}+\frac{1}{4\lambda }I_{2^{n-1}}%
\end{array}%
\right) \,,
\]%
where $I_{2^{n-1}}$ is a $2^{n-1}\times 2^{n-1}$ identity matrix, $Q^{(n)}$
and $R^{(n)}$ are $2^{n-1}\times 2^{n-1}$ matrices can be constructed
recursively as follows
\begin{equation}
Q^{(1)}=q_{1},\quad R^{(1)}=q_{1}^{\ast }\,,  \label{NCSPE_Laxb1}
\end{equation}%
\begin{equation}
Q^{(n+1)}=\left(
\begin{array}{cc}
Q^{(n)} & q_{n+1}I_{2^{n-1}} \\
-q_{n+1}^{\ast }I_{2^{n-1}} & R^{(n)}%
\end{array}%
\right) \,,  \label{NCSPE_Laxb2}
\end{equation}

\begin{equation}  \label{NCSPE_Laxb3}
R^{(n+1)}= \left(%
\begin{array}{cc}
R^{(n)} & -q_{n+1} I_{2^{n-1}} \\
q^*_{n+1}I_{2^{n-1}} & Q^{(n)}%
\end{array}%
\right) \,.
\end{equation}
By the above construction, we have $R^{(n+1)}=(Q^{(n+1)})^\dag$, and further
\begin{equation}
Q^{(n)} R^{(n)}=R^{(n)}Q^{(n)}=\sum_{i=1}^n|q_i|^2I_{2^{n-1}}\,.
\end{equation}
Therefore, the zero curvature condition $U_t-V_x+[U,\,V]=0$ gives the vector
complex coupled short pulse equation (\ref{NCSPE}). 

\subsection{Local and nonlocal conservation laws}

Following a systematic method developed by in \cite%
{TsuchidaJPSJ,WadatiPTP75,WadatiJPSJ79,Zimerman}, we construct conservation
laws for the vector complex short pulse equation, the conservation laws for
the complex and coupled short pulse equations can be treated as special
cases for $n=1,2$, respectively. To this end, let us rewrite the Lax pair
for the vector complex short pulse equation as follows:
\begin{equation}  \label{Laxpair_vcspe1}
\left(%
\begin{array}{c}
\Psi_1 \\
\Psi_2%
\end{array}
\right)_x = \left(%
\begin{array}{cc}
\lambda I & \lambda Q_x \\
\lambda R_x & - \lambda I%
\end{array}%
\right) \left(%
\begin{array}{c}
\Psi_1 \\
\Psi_2%
\end{array}
\right)\,,
\end{equation}

\begin{equation}  \label{Laxpair_vcspe2}
\left(%
\begin{array}{c}
\Psi_1 \\
\Psi_2%
\end{array}
\right)_t = \left(\begin{array}{cc}
-\frac{\lambda }{2}QR-\frac{1}{4\lambda }I_{2} & -\frac{\lambda }{2}QRQ_{x}+%
\frac{1}{2}Q \\
-\frac{\lambda }{2}RQR_{x}-\frac{1}{2}R & \frac{\lambda }{2}QR+\frac{1}{%
4\lambda }I_{2}%
\end{array}%
\right)  \left(%
\begin{array}{c}
\Psi_1 \\
\Psi_2%
\end{array}
\right)\,.
\end{equation}
Here the size of matrices in the entries of Eqs. (\ref{Laxpair_vcspe1})--(%
\ref{Laxpair_vcspe2}) is of $2^{n-1} \times 2^{n-1}$ and is omitted for
brevity. If we define
\begin{equation}  \label{Cons_law1}
\Gamma \equiv \Psi_2 \Psi_1^{-1}\,
\end{equation}
then we have
\begin{equation}  \label{Cons_law2}
2 \lambda Q_x \Gamma = \lambda Q_x R_x - Q_x ((Q_x)^{-1} \cdot Q_x \Gamma)_x
- \lambda (Q_x \Gamma)^2\,
\end{equation}


Expanding $Q_x \Gamma$ in terms of the spectral parameter $\lambda$ as
follows
\begin{equation}  \label{Cons_law4}
Q_x \Gamma = \sum_{n=0}^\infty F_n \lambda^{-n}\,,
\end{equation}
and substituting into Eq. (\ref{Cons_law2}), we obtain the following
relation
\begin{equation}  \label{Cons_law5}
2 \lambda F_n = Q_x R_x \delta_{n,0} - Q_x((Q_x)^{-1} F_{n-1})_x -
\sum_{l=0}^n F_l F_{n-l}.
\end{equation}
The first local conserved density turns out to be
\begin{equation}  \label{Cons_law6}
F_0=\left(-1+\sqrt{1+\sum|q_{i,x}|^2}\right)I\,,
\end{equation}

which is associated with a Hamiltonian of
\begin{equation}  \label{Cons_law7a}
H_0=\int \sqrt{1+|q_x|^2}\, dx \,,
\end{equation}
for the complex short pulse equation (\ref{CSP}) and
\begin{equation}  \label{Cons_law8a}
H_0=\int \sqrt{1+|q_{1,x}|^2+|q_{2,x}|^2}\, dx \,,
\end{equation}
for the coupled complex short pulse equation (\ref{CCSP1})--(\ref{CCSP2}).

Following the procedure in \cite{Zimerman}, we can find the nonlocal
conservation laws for vector complex short pulse equation. To this end, we
expand $Q_x \Gamma$ as follows
\begin{equation}  \label{Cons_law9}
Q_x \Gamma = \sum_{n=1}^\infty F_{-n} (2\lambda)^{n}\,.
\end{equation}
The first two orders in $\lambda$ yield the following equations
\begin{equation}  \label{Cons_law10}
0 = Q_x R_x -Q_x \left((Q_x)^{-1}F_{-1}\right)_x\,,
\end{equation}
\begin{equation}  \label{Cons_law11}
2F_{-1} = -Q_x \left((Q_x)^{-1}F_{-2}\right)_x\,,
\end{equation}
from which, the first two nonlocal conserved densities can be calculated as
\begin{equation}  \label{Cons_law12}
F_{-1}= \frac 12 Q_x R\,,
\end{equation}
\begin{equation}  \label{Cons_law13}
F_{-2}= \frac 12 Q R\ -\frac 12 \partial_x\left(Q\partial_x R \right).
\end{equation}
The first one turns out to be a trivial one, the second one accounts for a
Hamiltonian
\begin{equation}  \label{Cons_law7c}
H_{-1}= \frac 12 \int |q|^2\, dx \,,
\end{equation}
for the complex short pulse equation (\ref{CSP}) and
\begin{equation}  \label{Cons_law8c}
H_{-1}= \frac 12\int (|q_{1}|^2+|q_{2}|^2)\, dx \,,
\end{equation}
for the coupled complex short pulse equation (\ref{CCSP1})--(\ref{CCSP2}).
\section{Multi-soliton solutions by Hirota's bilinear method}


\subsection{Bilinear equations and $N$-soliton solution to the complex short
pulse equation}

\begin{proposition}
The complex short pulse equation is derived from the following bilinear
equations.
\begin{equation}  \label{CSPE_bilinear1}
D_sD_y f \cdot g =fg\,,
\end{equation}
\begin{equation}  \label{CSPE_bilinear2}
D^2_s f \cdot f =\frac{1}{2} |g|^2\,,
\end{equation}
by dependent variable transformation 
\begin{equation}  \label{CSP_dtrf}
q=\frac{g}{f}\,,
\end{equation}
and hodograph transformation
\begin{equation}
x = y -2(\ln f)_s\,, \quad t=-s \,,  \label{CSP_hodograph}
\end{equation}
where $D$ is called Hirota $D$-operator defined by
\[
D_s^n D_y^m f\cdot g=\left(\frac{\partial}{\partial s} -\frac{\partial}{%
\partial s^{\prime }}\right)^n \left(\frac{\partial}{\partial y} -\frac{%
\partial}{\partial y^{\prime }}\right)^m f(y,s)g(y^{\prime },s^{\prime
})|_{y=y^{\prime }, s=s^{\prime }}\,.
\]
\end{proposition}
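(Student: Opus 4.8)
The plan is to convert the two bilinear equations (\ref{CSPE_bilinear1})--(\ref{CSPE_bilinear2}) into pointwise relations for $q=g/f$ and $\ln f$ in the computational variables $(y,s)$, and then transport these relations to the physical variables $(x,t)$ through the chain rule dictated by the hodograph transformation (\ref{CSP_hodograph}). Throughout I take $f$ real-valued and $g$ complex --- as holds for the $\tau$-functions constructed in the next section --- so that $|q|^2=|g|^2/f^2$ and $(\ln f)_s$ is real, which makes $x=y-2(\ln f)_s$ a genuine real coordinate.

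First I would extract the content of the second bilinear equation. Since $D_s^2 f\cdot f=2(ff_{ss}-f_s^2)=2f^2(\ln f)_{ss}$, equation (\ref{CSPE_bilinear2}) reads $(\ln f)_{ss}=|g|^2/(4f^2)=\frac14|q|^2$. Differentiating the hodograph relation in $s$ then yields the clean identity $x_s=-2(\ln f)_{ss}=-\frac12|q|^2$, while the hodograph relation together with $t=-s$ give $x_y=1-2(\ln f)_{sy}$, $t_s=-1$, and $t_y=0$.

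Next I would rewrite the first bilinear equation in terms of $q$. Substituting $g=qf$ into $D_sD_yf\cdot g=f_{sy}g-f_sg_y-f_yg_s+fg_{sy}$ and dividing by $f^2$, the terms linear in $q_y$ and $q_s$ cancel in pairs, leaving $q_{sy}+2q(\ln f)_{sy}=q$; using $2(\ln f)_{sy}=1-x_y$ this collapses to the compact relation $q_{sy}=q\,x_y$. This algebraic reduction is the step I expect to demand the most care, since the cancellations must be tracked exactly and recognizing the surviving combination as $(\ln f)_{sy}$ is what renders the identity usable.

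Finally I would invert the Jacobian of $(y,s)\mapsto(x,t)$ to obtain the operator identities $\partial_x=x_y^{-1}\partial_y$ and $\partial_t=-\partial_s+x_s\partial_x$, and substitute them into $q_{sy}=q\,x_y$. Writing $q_s=-q_t+x_sq_x$ and then applying $\partial_y=x_y\partial_x$ gives $q_{sy}=x_y\bigl[-q_{xt}+(x_sq_x)_x\bigr]$; cancelling the common factor $x_y$ leaves $q_{xt}=(x_sq_x)_x-q$. Inserting $x_s=-\frac12|q|^2$ from the first step produces exactly $q_{xt}+q+\frac12(|q|^2q_x)_x=0$, which is the complex short pulse equation (\ref{CSP}). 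The only caveat is the nondegeneracy $x_y\neq0$, which holds wherever the hodograph map is invertible.
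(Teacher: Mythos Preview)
Your proof is correct and follows essentially the same route as the paper: both arguments divide the bilinear equations by $f^2$ to obtain $q_{sy}+2q(\ln f)_{sy}=q$ and $(\ln f)_{ss}=\tfrac14|q|^2$, then pass to $(x,t)$ via the chain rule to recover (\ref{CSP}). The only cosmetic differences are that the paper packages $x_y$ as $\rho^{-1}$ and writes the transformed equation as $\partial_x(-\partial_t-\tfrac12|q|^2\partial_x)q=q$ directly, whereas you expand $q_{sy}$ in $(x,t)$-derivatives first; you also make the reality of $f$ explicit, which the paper leaves implicit.
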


.

\begin{proof}
Dividing both sides by $f^2$, the bilinear equations (\ref{CSPE_bilinear1}%
)-- (\ref{CSPE_bilinear2}) can be cast into
\begin{equation}
\left\{%
\begin{array}{l}
\displaystyle \left(\frac{g}{f} \right)_{sy} + 2\frac{g}{f} \left( \ln
f\right)_{sy} = \frac{g}{f}\,, \\[5pt]
\displaystyle \left( \ln f\right)_{ss} =\frac{1}{4} \frac{|g|^2}{f^2}\,.%
\end{array}%
\right.  \label{CSP_BL2}
\end{equation}
From the hodograph transformation and dependent variable transformation, we
then have
\[
\frac{\partial x}{\partial s} = -2(\ln f)_{ss} = -\frac 12 |q|^2\,, \qquad
\frac{\partial x}{\partial y} = 1-2(\ln f)_{sy}\,,
\]
which implies
\begin{equation}  \label{CSP_BL3}
{\partial_y} = \rho^{-1} {\partial_x}\,, \qquad {\partial_s} = -{\partial_t}
- \frac 12 |q|^2 {\partial_x}\,
\end{equation}
by letting $1-2(\ln f)_{sy} = \rho^{-1}$.

Notice that the first equation in (\ref{CSP_BL2}) can be rewritten as
\begin{equation}
\left(\frac{g}{f} \right)_{sy} = \left(1-2(\ln f)_{sy} \right) \frac{g}{f}\,,
\end{equation}
or
\begin{equation}  \label{CSP_BL4}
\rho \left(\frac{g}{f} \right)_{sy} = \frac{g}{f}\,,
\end{equation}
which is converted into
\begin{equation}  \label{CSPE1}
\partial_x \left(-\partial_t - \frac 12 |q|^2 \partial_x \right)q = q\,,
\end{equation}
by using (\ref{CSP_BL3}). Eq. (\ref{CSPE1}) is nothing but the complex short
pulse equation (\ref{CSP}).
\end{proof}

$N$-soliton solution to the bilinear equations (\ref{CSPE_bilinear1})--(\ref%
{CSPE_bilinear2}) can be expressed by pfaffians similar to the ones for
coupled modified KdV equation \cite{IwaoHirota}. To this end, we need to
define two sets: $B_\mu$ ($\mu=1,2$): $B_1 = \{b_1, b_2, \cdots, b_{N} \}$, $%
B_2 = \{b_{N+1}, b_2, \cdots, b_{2N} \}$, and an index function of $b_j$ by $%
index(b_j) = \mu$ \ if $b_j \in B_\mu$.

\begin{theorem}
The pfaffians 
\begin{eqnarray}
f &=& \mathrm{Pf} (a_1, \cdots, a_{2N}, b_1, \cdots, b_{2N})\,,
\label{CSP_sl1} \\
g &=& \mathrm{Pf} (d_0, \beta_1, a_1, \cdots, a_{2N}, b_1, \cdots, b_{2N})\,.
\label{CSP_sl2}
\end{eqnarray}
satisfy the bilinear equations (\ref{CSPE_bilinear1})--(\ref{CSPE_bilinear2}%
) provided that the elements of the pfaffians are defined by
\begin{equation}  \label{CSPE_pf1}
\mathrm{Pf}(a_j,a_k)= \frac{p_j-p_k}{p_j+p_k} e^{\eta_j+\eta_k}\,, \quad
\mathrm{Pf}(a_j,b_k)=\delta_{j,k}\,,
\end{equation}
\begin{equation}  \label{CSPE_pf2}
\mathrm{Pf}(b_j,b_k)=\frac 14 \frac{\alpha_j \alpha_k}{p^{-2}_j-p^{-2}_{k}}
\delta_{\mu+1, \nu}\,, \quad \mathrm{Pf}(d_l,a_k)= p_k^{l} e^{\eta_k}\,,
\end{equation}

\begin{equation}  \label{CSPE_pf4}
\mathrm{Pf}(b_j,\beta_1)=\alpha_j \delta_{\mu, 1}\,, \quad \mathrm{Pf}%
(d_0,b_j) =\mathrm{Pf}(d_0,\beta_1) = \mathrm{Pf}(a_j,\beta_1)=0\,.
\end{equation}
Here $\mu=index(b_j)$, $\nu=index(b_k)$, $\eta_j=p_j y + p_j^{-1} s$ which
satisfying $p_{j+N}=\bar{p}_j$, $\alpha_{j+N}=\bar{\alpha}_{j}$, $\bar{p_j}$
and $\bar{\alpha}_{j}$ represent the complex conjugates of $p_j$ and ${\alpha%
}_{j}$, respectively. The same notation will be used hereafter.
\end{theorem}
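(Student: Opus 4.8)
The plan is to verify the two bilinear equations (\ref{CSPE_bilinear1})--(\ref{CSPE_bilinear2}) directly at the level of pfaffians, combining the pfaffian differentiation rules with the four-term (Pl\"ucker/Jacobi-type) identity for pfaffians, exactly in the spirit of the coupled modified KdV computation of \cite{IwaoHirota}. First I would fix all entry values needed for differentiation. Writing $\mathcal{A}=(a_1,\dots,a_{2N},b_1,\dots,b_{2N})$ for the common index list, I would extend the data in (\ref{CSPE_pf1})--(\ref{CSPE_pf4}) by $\mathrm{Pf}(d_l,a_k)=p_k^{l}e^{\eta_k}$ for $l=-2,-1,0,1$ and $\mathrm{Pf}(d_l,d_m)=\mathrm{Pf}(d_l,b_j)=\mathrm{Pf}(d_l,\beta_1)=0$. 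Since $\eta_j=p_jy+p_j^{-1}s$, only the entries $\mathrm{Pf}(a_j,a_k)$ depend on $(y,s)$, and a short computation gives $\partial_y\mathrm{Pf}(a_j,a_k)=(p_j-p_k)e^{\eta_j+\eta_k}=\mathrm{Pf}(d_0,d_1,a_j,a_k)$ and $\partial_s\mathrm{Pf}(a_j,a_k)=(p_k^{-1}-p_j^{-1})e^{\eta_j+\eta_k}=\mathrm{Pf}(d_{-1},d_0,a_j,a_k)$. By the standard derivative formula, differentiation then amounts to inserting a pair of $d$-characters, while $\partial_y,\partial_s$ shift the index of an already-present $d_l$, and any term with a repeated character vanishes. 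Carrying this out gives
\[
f_y=\mathrm{Pf}(d_0,d_1,\mathcal{A}),\quad f_s=\mathrm{Pf}(d_{-1},d_0,\mathcal{A}),\quad f_{sy}=\mathrm{Pf}(d_{-1},d_1,\mathcal{A}),\quad f_{ss}=\mathrm{Pf}(d_{-2},d_0,\mathcal{A}),
\]
and, for $g=\mathrm{Pf}(d_0,\beta_1,\mathcal{A})$, the formulas $g_y=\mathrm{Pf}(d_1,\beta_1,\mathcal{A})$, $g_s=\mathrm{Pf}(d_{-1},\beta_1,\mathcal{A})$, and $g_{sy}=g+\mathrm{Pf}(d_{-1},d_0,d_1,\beta_1,\mathcal{A})$.

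For the first bilinear equation I would expand $D_sD_yf\cdot g=f_{sy}g-f_sg_y-f_yg_s+fg_{sy}$ and substitute the above. The piece $f\cdot g$ arising from $g_{sy}$ reproduces the right-hand side $fg$, so it remains to show that the surviving products cancel. This is precisely the four-term pfaffian identity applied to the special characters $(d_{-1},d_0,d_1,\beta_1)$,
\[
\mathrm{Pf}(d_{-1},d_0,d_1,\beta_1,\mathcal{A})\mathrm{Pf}(\mathcal{A})=\mathrm{Pf}(d_{-1},d_0,\mathcal{A})\mathrm{Pf}(d_1,\beta_1,\mathcal{A})-\mathrm{Pf}(d_{-1},d_1,\mathcal{A})\mathrm{Pf}(d_0,\beta_1,\mathcal{A})+\mathrm{Pf}(d_{-1},\beta_1,\mathcal{A})\mathrm{Pf}(d_0,d_1,\mathcal{A});
\]
substituting it, every remaining term cancels in pairs, which proves (\ref{CSPE_bilinear1}).

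For the second bilinear equation I would rewrite $D_s^2f\cdot f=2\big(f_{ss}f-f_s^2\big)=2\big(\mathrm{Pf}(d_{-2},d_0,\mathcal{A})\mathrm{Pf}(\mathcal{A})-\mathrm{Pf}(d_{-1},d_0,\mathcal{A})^2\big)$ and interpret the right-hand side $\frac{1}{2}|g|^2=\frac{1}{2}g\bar g$ as a product of conjugate pfaffians. The reality constraints $p_{j+N}=\bar p_j$, $\alpha_{j+N}=\bar\alpha_j$ make complex conjugation act on the characters as the index shift $a_j\leftrightarrow a_{j+N}$, $b_j\leftrightarrow b_{j+N}$, with $d_0$ fixed and $\beta_1$ sent to a conjugate source $\beta_2$ coupling to $B_2$, so that $\bar g$ is again a pfaffian of the same list up to the sign of this permutation. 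The task is then to match $g\bar g$ with $2(f_{ss}f-f_s^2)$ through a pfaffian identity, where the tuned values $\mathrm{Pf}(b_j,b_k)=\frac{1}{4}\frac{\alpha_j\alpha_k}{p_j^{-2}-p_k^{-2}}\delta_{\mu+1,\nu}$ and $\mathrm{Pf}(b_j,\beta_1)=\alpha_j\delta_{\mu,1}$ supply exactly the constants (the factor $1/4$, the $p_j^{-2}-p_k^{-2}$ denominator, and the index-parity $\delta$'s) that make the bookkeeping close.

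I expect this last step to be the main obstacle. Unlike (\ref{CSPE_bilinear1}), the two factors $g$ and $\bar g$ share the character $d_0$, so the product cannot be collapsed by a single four-term expansion; one must instead track how the $b$--$b$ and $b$--$\beta$ couplings distribute across the conjugate pfaffian and verify that the tuned coefficients reproduce precisely $\mathrm{Pf}(d_{-2},d_0,\mathcal{A})\mathrm{Pf}(\mathcal{A})-\mathrm{Pf}(d_{-1},d_0,\mathcal{A})^2$. Checking the consistency of the conjugation action — including the parity of the index shift on $2N$ elements, which is what guarantees that $f$ is real — and matching the overall factor $\frac{1}{2}$ is the delicate part; the analogous identity for coupled mKdV in \cite{IwaoHirota} is the natural template to follow.
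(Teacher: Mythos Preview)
Your proposal is correct and follows essentially the same route as the paper: the first bilinear equation is proved exactly as you describe, by expressing the derivatives of $f$ and $g$ as pfaffians with inserted $d$-characters and invoking the four-term pfaffian identity for $(d_{-1},d_0,d_1,\beta_1)$. For the second equation the paper carries out precisely the Iwao--Hirota computation you point to, the concrete mechanism being to expand both $f_{ss}f-f_s^2$ (on the $a_i$) and $\tfrac{1}{2}g\bar g$ (on the $b_i$) and show, via the vanishing pfaffians $\mathrm{Pf}(a_i,d_0,\mathcal{A})=\mathrm{Pf}(b_i,d_0,\mathcal{A})=0$ together with the tuned value of $\mathrm{Pf}(b_j,b_k)$, that each side equals the common expression $\sum_i p_i^{-2}\,\mathrm{Pf}(d_0,\ldots,\hat a_i,\ldots)\,\mathrm{Pf}(d_0,\ldots,\hat b_i,\ldots)$.
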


The proof of the Theorem is given in Appendix. Combined with dependent and
hodograph transformations (\ref{CSP_dtrf})--(\ref{CSP_hodograph}), the above
pfaffians (\ref{CSP_sl1})--(\ref{CSP_sl2}) give $N$-soliton solution to the
complex short pulse equation (\ref{CSP}) in parametric form.

\subsection{One- and two-soliton solutions for the complex short pulse
equation}

In this subsection, we provide one- and two-soliton to the complex short
pulse equation (\ref{CSP}) and give a detailed analysis for their properties.

\subsubsection{One-soliton solution}

Based on (\ref{CSP_sl1})--(\ref{CSP_sl2}), the tau-functions for one-soliton
solution ($N=1$) are
\begin{eqnarray}
&& f = -1-\frac 14 \frac {|\alpha_1|^2(p_1\bar{p}_1)^2}{(p_1+\bar{p}_1)^2}
e^{\eta_1+\bar{\eta}_1} \,,
\end{eqnarray}
\begin{equation}
g = -\alpha_1 e^{\eta_1}\,.
\end{equation}

Let $p_1=p_{1R}+ \mathrm{i}p_{1I}$, and we assume $p_{1R} >0$ without loss
of generality, then the one-soliton solution can be expressed in the
following parametric form
\begin{equation}  \label{CSP1solitona}
q=\frac{\alpha_1}{|\alpha_1|}\frac{2p_{1R}}{|p_1|^2}e^{\mathrm{i} \eta_{1I}} %
\mbox{sech} \left(\eta_{1R}+\eta_{10}\right) \,,
\end{equation}
\begin{equation}  \label{CSP1solitonb}
x=y-\frac{2p_{1R}}{|p_1|^2}\left(\tanh
\left(\eta_{1R}+\eta_{10}\right)+1\right)\,, \quad t=-s\,,
\end{equation}
where
\begin{equation}
\eta_{1R}=p_{1R}y+\frac{p_{1R}}{|p_1|^2}s , \quad \eta_{1I}=p_{1I} y-\frac{%
p_{1I}}{|p_1|^2}s \,,\quad \eta_{10}=\ln \frac{|\alpha_1||p_1|^2}{4p_{1R}}\,.
\end{equation}

Eq. (\ref{CSP1solitona}) represents an envelope soliton of amplitude $%
2p_{1R}/|p_1|^2$ and phase $\eta_{1I}$. To analyze the property for the
one-soliton solution, we calculate out
\begin{equation}
\frac{\partial x}{\partial y} = 1- \frac{2p^2_{1R}}{|p_1|^2} {\mbox{sech}}%
^2(\eta_{1R}+\eta_{10})\,.
\end{equation}
Therefore, $\partial x/\partial y \to 1$ as $y \to \pm \infty$. Moreover, it
attains a minimum value of $({p^2_{1I}-p^2_{1R}})/({p^2_{1I}+p^2_{1R}})$
at the peak point of envelope soliton where $\eta_{1R}+\eta_{10}=0$. Since ${%
\partial |q|}/{\partial x}=\frac{\partial |q|/\partial y}{\partial
x/\partial y}$, we can classify this one-soliton solution as follows:

\begin{itemize}
\item \textbf{smooth soliton:} when $|p_{1R}| < |p_{1I}|$, ${\partial x}/{%
\partial y}$ is always positive, which leads to a smooth envelope soliton
similar to the envelope soliton for the nonlinear Schr{\"o}dinger equation.
An example with $p_1=1+1.5\mathrm{i}$ is illustrated in Fig. 1 (a).

\item \textbf{loop soliton:} when $|p_{1R}| > |p_{1I}|$, the minimum value
of ${\partial x}/{\partial y}$ at the peak point of the soliton becomes
negative. In view of the fact that $\partial x/\partial y \to 1$ as $y \to
\pm \infty$, ${\partial x}/{\partial y}$ has two zeros at both sides of the
peak of the envelope soliton. Moreover, ${\partial x}/{\partial y}< 0$
between these two zeros. This leads to a loop soliton for the envelope of $q$%
. An example is shown in Fig. (b) with $p_1=1+0.5\mathrm{i}$.

\item \textbf{cuspon soliton:} when $|p_{1R}| = |p_{1I}|$, ${\partial x}/{%
\partial y}$ has a minimum value of zero at $\eta_{1R}+\eta_{10}=0$, which
makes the derivative of the envelope $|q|$ with respect to $x$ going to
infinity at the peak point. Thus, we have a cusponed envelope soliton, which
is illustrated in Fig. 1 (c) with $p_1=1+\mathrm{i}$.
\end{itemize}
\begin{figure}[htbp]
\centerline{
\includegraphics[scale=0.35]{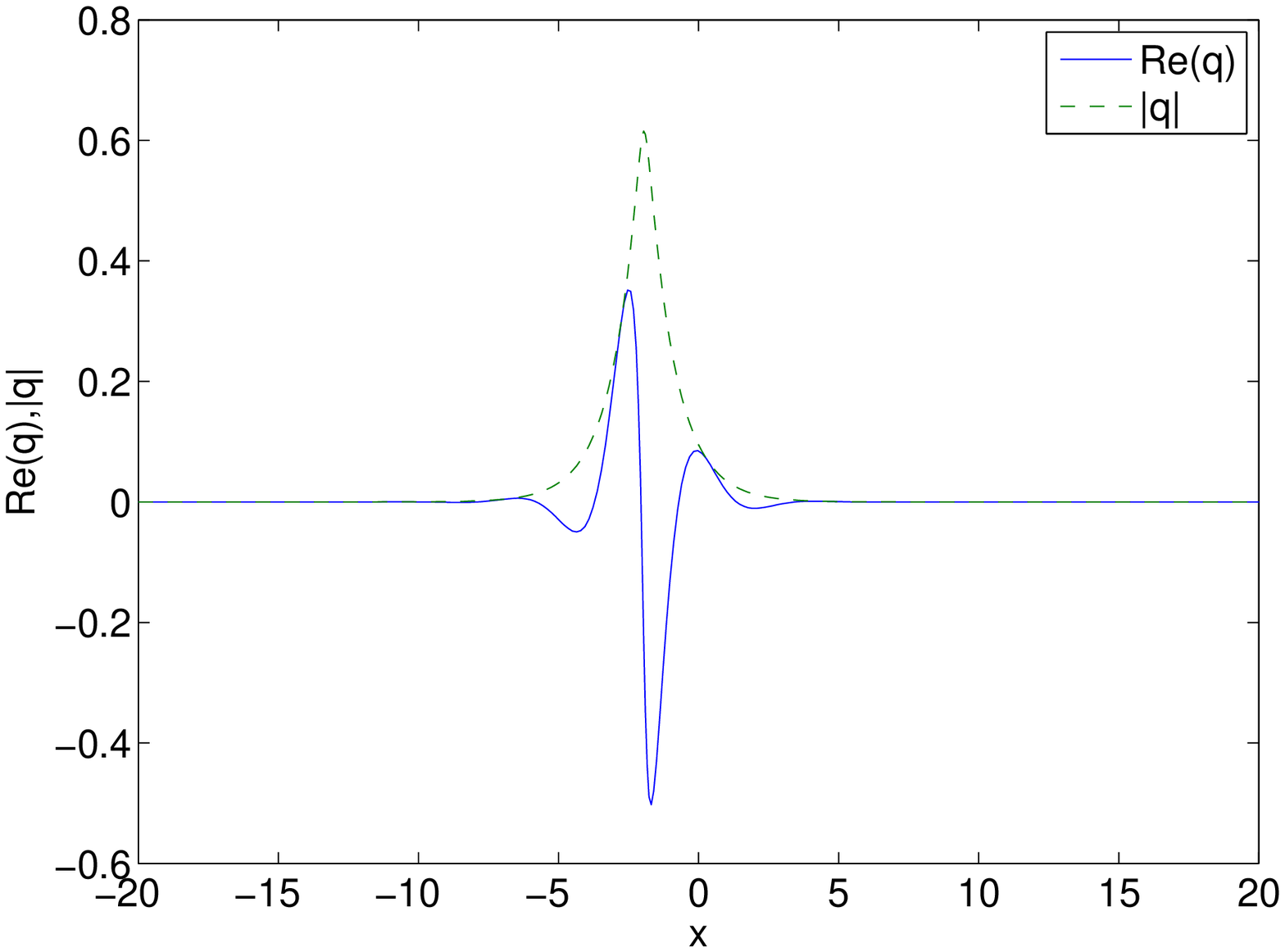}\quad
\includegraphics[scale=0.35]{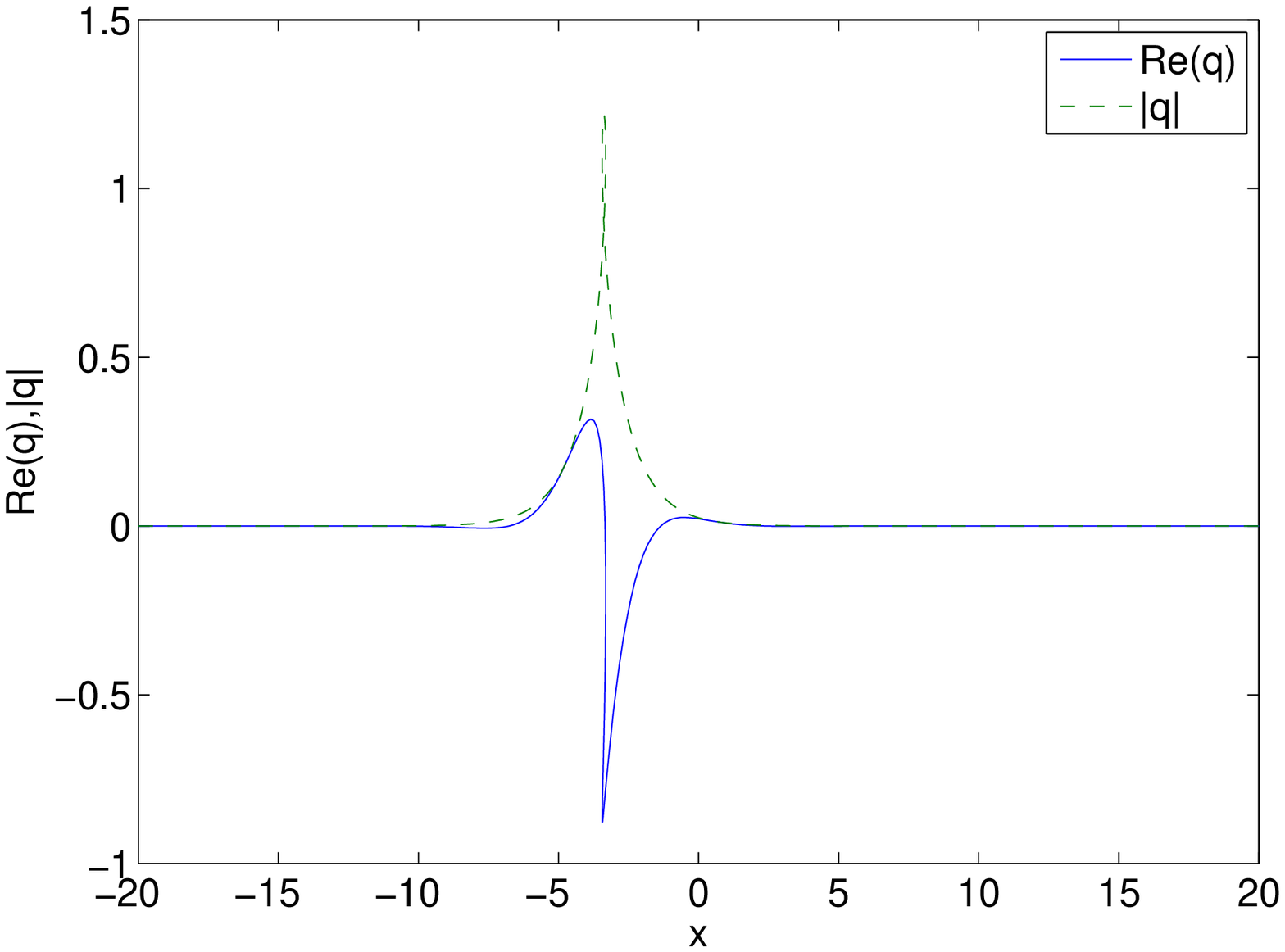}} \kern-0.3\textwidth
\hbox to
\textwidth{\hss(a)\kern16em\hss(b)\kern11em} \kern+0.3\textwidth
\centerline{
\includegraphics[scale=0.35]{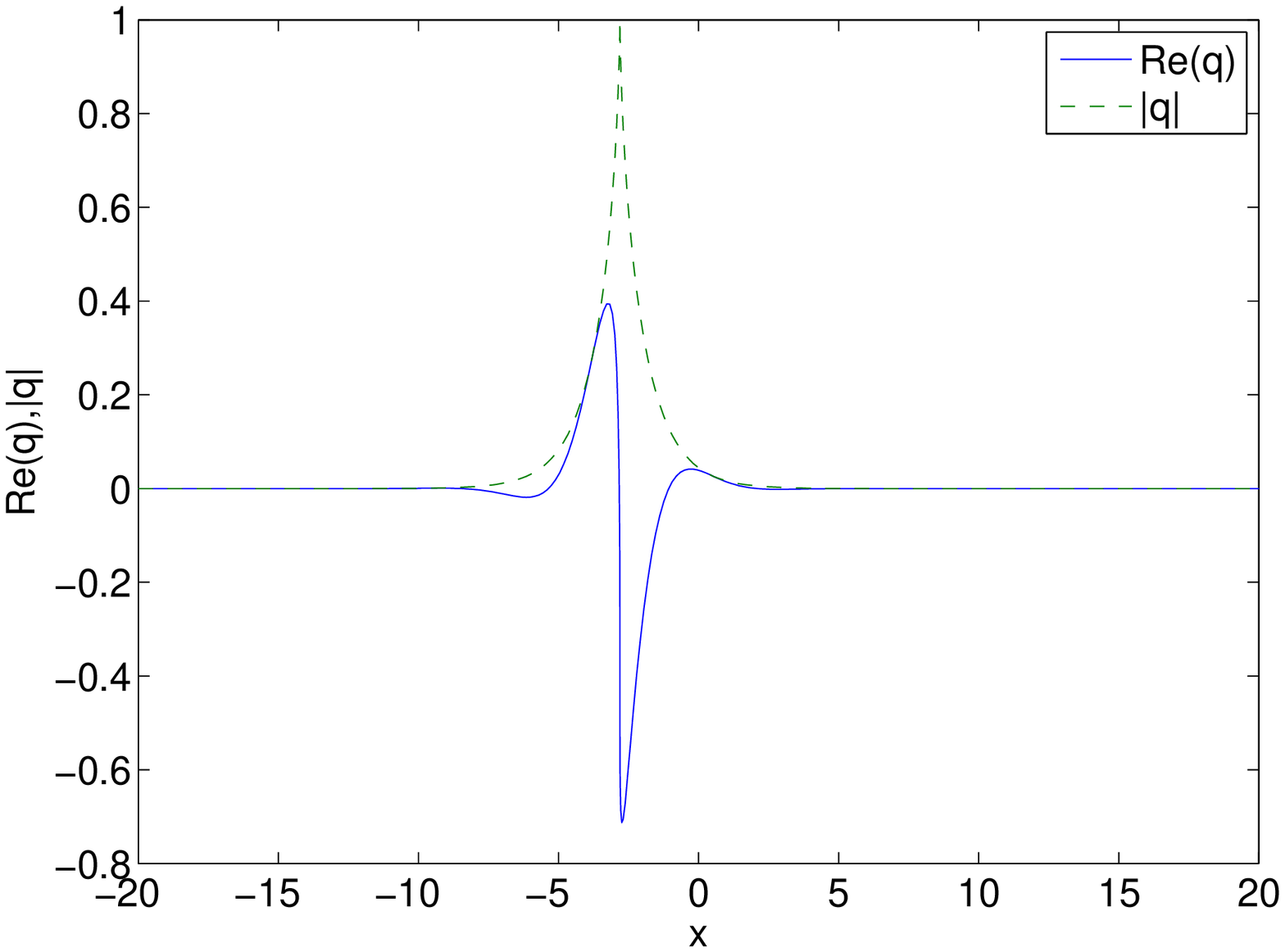}} \kern-0.3\textwidth
\hbox to
\textwidth{\hss(c)\kern21em} \kern+0.3\textwidth
\caption{Envelope soliton for the complex short pulse equation (\protect\ref%
{CSP}), solid line: $Re(q)$, dashed line: $|q|$; (a) smooth soliton with $%
p_1=1+1.5\mathrm{i}$, (b) loop soliotn with $p_1=1+0.5\mathrm{i}$, (c)
cuspon soliton with $p_1=1+\mathrm{i}$.}
\label{figure:cspe1soliton}
\end{figure}

\begin{remark}
The one-soliton solution to the short pulse equation (\ref{SPE}) is of
loop-type, which lacks physical meaning in the context of nonlinear optics.
However, the one-soliton solution to the complex short pulse equation (\ref%
{CSP}) is of breather-type, which allows physical meaning for optical pulse.
\end{remark}

\begin{remark}
When $|p_{1R}| <|p_{1I}|$, there is no singularity for one-soliton solution.
Moreover, in view of $\eta_{1R}$ associated with the width of envelope
soliton and $\eta_{1I}$ associated with the phase, it is obvious that this
nonsingular envelope soliton can only contain a few optical cycle. This
property coincides with the fact that the complex short pulse equation is
derived for the purpose of describing ultra-short pulse propagation. When $%
|p_{1R}| =|p_{1I}|$, the soliton becomes cuspon-like one, which agrees with
the results in \cite{Bandelow} derived from a bidirectional model.
\end{remark}

\subsubsection{Two-soliton solution}

Based on the $N$-soliton solution of the complex short pulse equation from (%
\ref{CSP_sl1})--(\ref{CSP_sl2}), the tau-functions for two-soliton solution
can be expanded for $N=2$
\begin{eqnarray}
&&f=\mathrm{Pf}(a_{1},a_{2},a_{3},a_{4},b_{1},b_{2},b_{3},b_{4})  \nonumber
\\
&&\quad =1+a_{1\bar{1}}e^{\eta _{1}+\bar{\eta}_{1}}+a_{1\bar{2}}e^{\eta _{1}+%
\bar{\eta}_{2}}+a_{2\bar{1}}e^{\eta _{2}+\bar{\eta}_{1}}+a_{2\bar{2}}e^{\eta
_{2}+\bar{\eta _{2}}}  \nonumber \\
&&\qquad +|P_{12}|^{2}\left( a_{1\bar{1}}a_{2\bar{2}}P_{1\bar{2}}P_{2\bar{1}%
}-a_{1\bar{2}}a_{2\bar{1}}P_{1\bar{1}}P_{2\bar{2}}\right) e^{\eta _{1}+\eta
_{2}+\bar{\eta}_{1}+\bar{\eta}_{2}}\,,
\end{eqnarray}

\begin{eqnarray}
&&g=\mathrm{Pf}(d_{0},\beta
_{1},a_{1},a_{2},a_{3},a_{4},b_{1},b_{2},b_{3},b_{4})  \nonumber \\
&&\quad =\alpha _{1}e^{\eta _{1}}+\alpha _{2}e^{\eta _{2}}+P_{12}\left(
\alpha _{1}P_{1\bar{1}}a_{2\bar{1}}-\alpha _{2}P_{2\bar{1}}a_{1\bar{1}%
}\right) e^{\eta _{1}+\eta _{2}+\bar{\eta}_{1}}  \nonumber \\
&&\qquad +P_{12}\left( \alpha _{1}P_{1\bar{2}}a_{2\bar{2}}-\alpha _{2}P_{2%
\bar{2}}a_{1\bar{2}}\right) e^{\eta _{1}+\eta _{2}+\bar{\eta}_{2}}\,,
\end{eqnarray}%
where
\begin{equation}
P_{ij}=\frac{p_{i}-p_{j}}{p_{i}+p_{j}}\,,\quad P_{i\bar{j}}=\frac{p_{i}-\bar{%
p}_{j}}{p_{i}+\bar{p}_{j}}\,,\quad a_{i\bar{j}}=\frac{\alpha _{i}\bar{\alpha}%
_{j}(p_{i}\bar{p}_{j})^{2}}{4(p_{i}+\bar{p}_{j})^{2}}\,,
\end{equation}%
and $\eta _{j}=p_{j}y+p_{j}^{-1}s$, $\bar{\eta}_{j}=\bar{p}_{j}y+\bar{p}%
_{j}^{-1}s$.

\begin{figure}[htbp]
\centerline{
\includegraphics[scale=0.35]{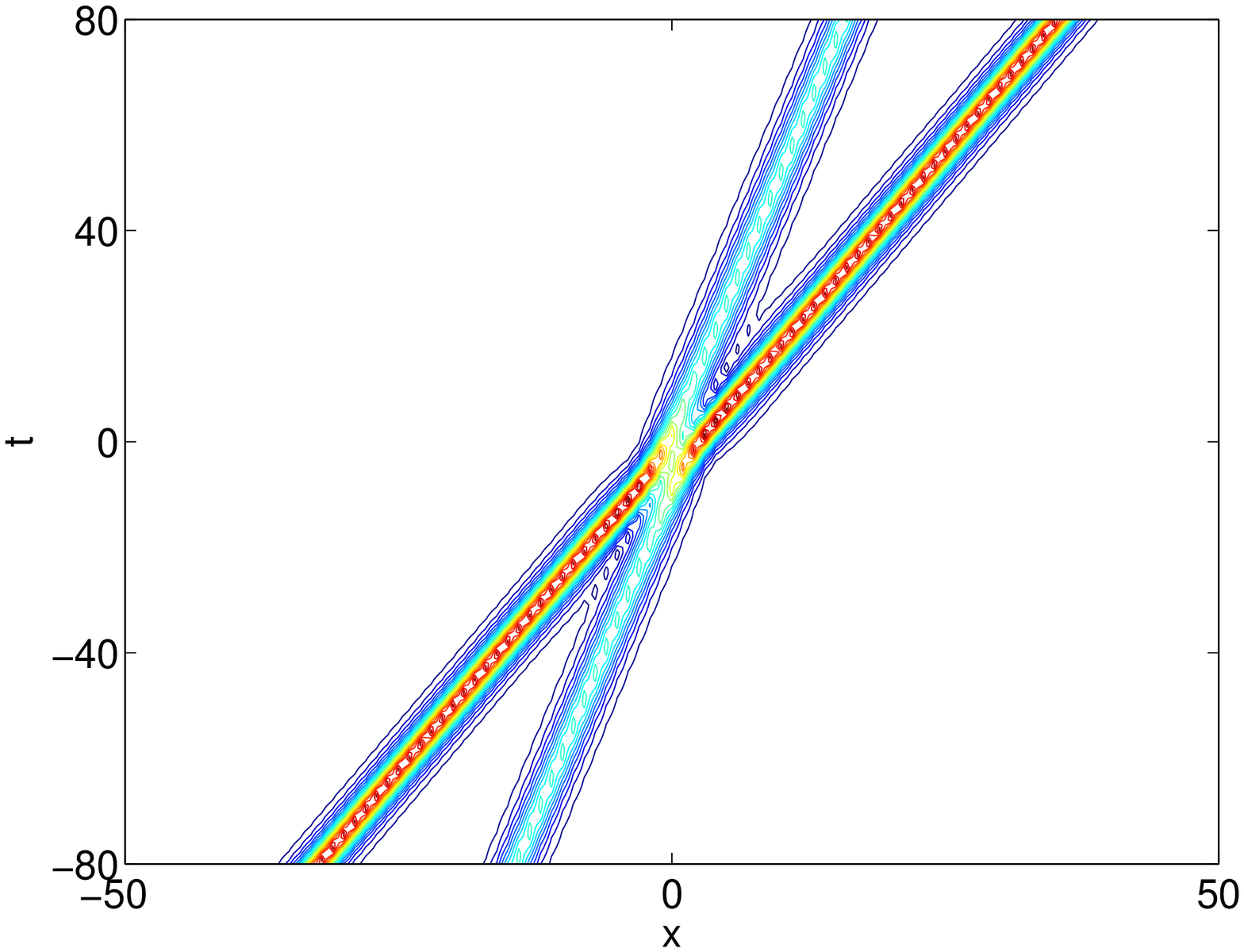}\quad
\includegraphics[scale=0.35]{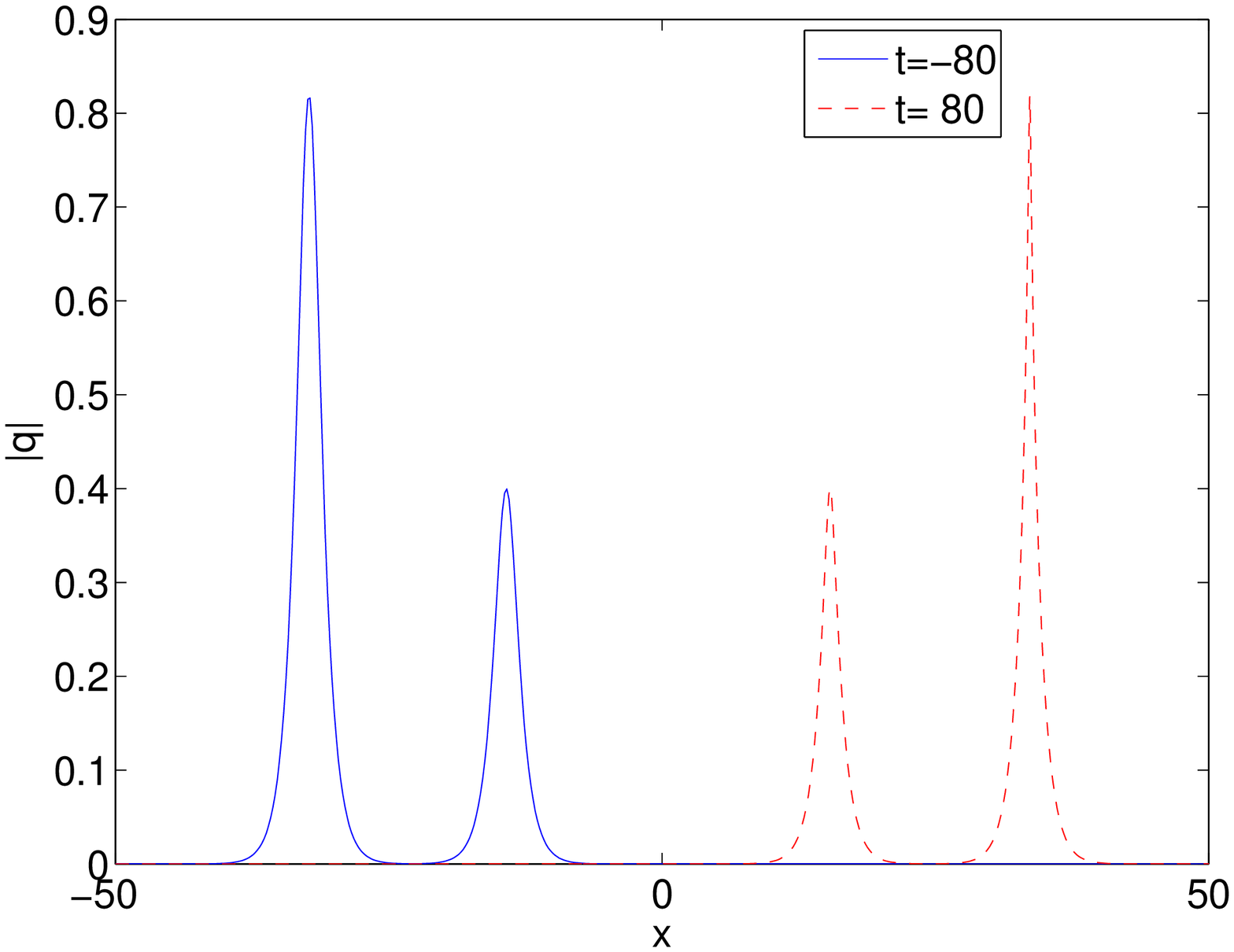}} \kern-0.315%
\textwidth \hbox to
\textwidth{\hss(a)\kern4em\hss(b)\kern2em} \kern+0.315\textwidth
\caption{Two-soliton solution to the complex short pulse equation (a)
contour plot; (b) profiles at $t=-80$, $80$.}
\label{f:1com2soliton}
\end{figure}
To avoid the singularity of the envelope solitons, the conditions $|p_{1R}|<
|p_{1I}|$ and $|p_{2R}|< |p_{2I}|$ need to be satisfied.
When two solitons stay apart, the amplitude of each soliton is of $2|p_{iR}|/|p_{i}|^2$, and the velocity is of $-1/|p_i|^2$ in the $ys$-coordinate system. Therefore, the soliton of larger velocity will catch up with and collide with the soliton of smaller velocity if it is initially located on the left. Furthermore, the
collision is elastic, and there is no change in shape and amplitude of
solitons except a phase shift. In Fig. 2, we illustrate
the contour plot for the collision of two solitons (a), as well as the
profiles (b) before and after the collision. The parameters are taken as $%
\alpha_1=\alpha_2=1.0$, $p_1=1+1.2\mathrm{i}$ and $p_2=1+2\mathrm{i}$.

Since the velocity of single envelope soliton is $-1/|p_i|^2$ in the $ys$%
-coordinate system, a bound state can be formed under the condition of $%
|p_1|^2 = |p_2|^2$ if two solitons stay close enough and move with the same
velocity. Such a bound state is shown in Fig. 3 for
parameters chosen as $\alpha_1=\alpha_2=1.0$, $p_1=1.3+1.8193\mathrm{i}$, $%
p_2=1+2\mathrm{i}$.
It is interesting that the envelope of the bound state oscillates
periodically as it moves along $x$-axis.
\begin{figure}[htbp]
\centerline{
\includegraphics[scale=0.35]{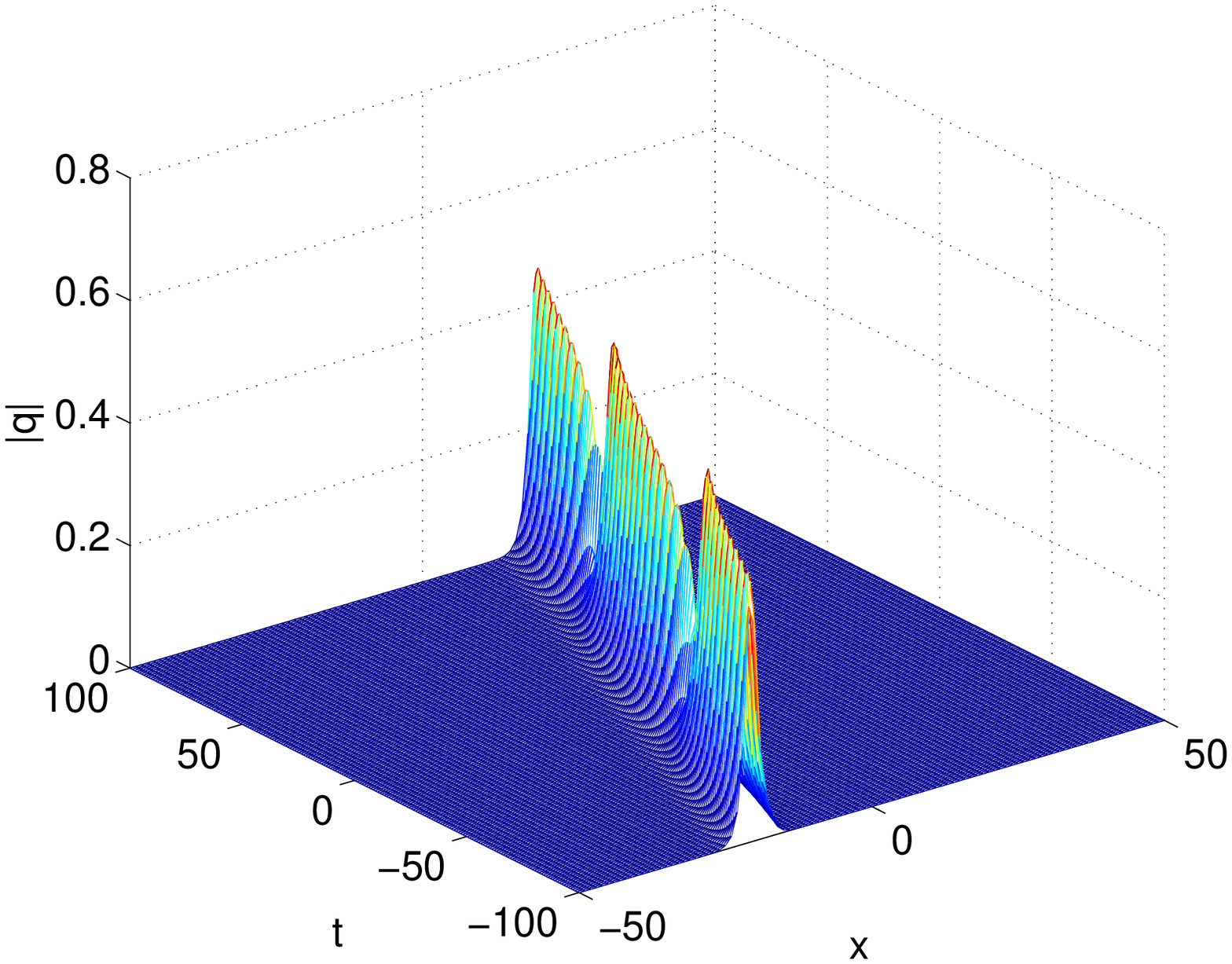}\quad
\includegraphics[scale=0.35]{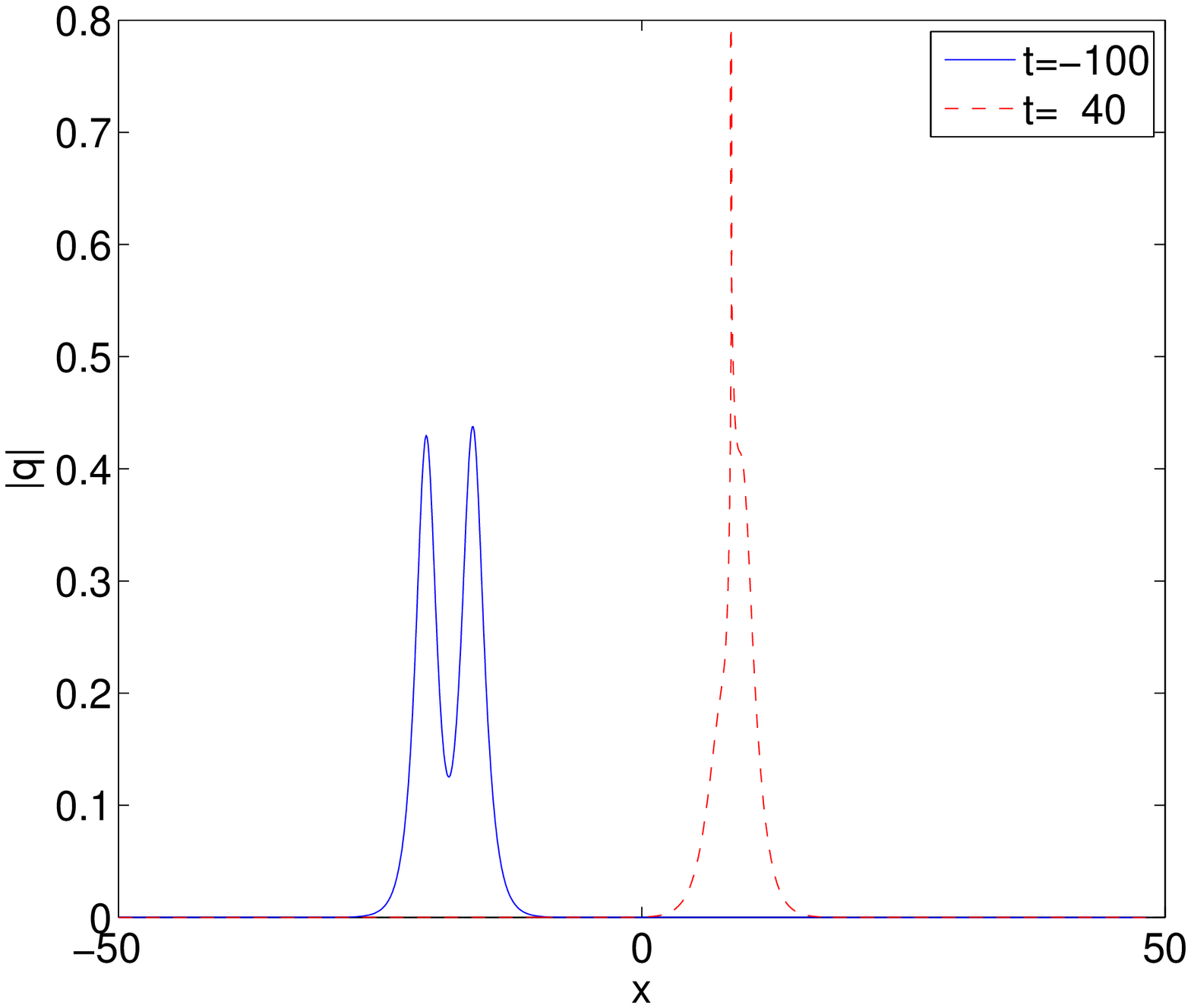}} \kern-0.315\textwidth
\hbox to
\textwidth{\hss(a)\kern4em\hss(b)\kern2em} \kern+0.315\textwidth
\caption{Bound state to the complex short pulse equation: (a) 3D plot (b)
profiles at $t=-100$, $40$.}
\label{f:1comboundstate}
\end{figure}
\subsection{Bilinear equations and $N$-soliton solutions to the coupled
complex short pulse equation}

\begin{proposition}
The coupled complex short pulse equation is derived from bilinear equations
\begin{equation}  \label{CCSPE_bilinear1}
D_sD_y f \cdot g_i =fg_i, \quad i=1,2 \,,
\end{equation}
\begin{equation}  \label{CCSPE_bilinear2}
D^2_s f \cdot f =\frac{1}{2} \left(|g_1|^2+|g_2|^2\right)\,,
\end{equation}
by dependent variable transformation
\begin{equation}  \label{CCSPE_vartrf}
q_1=\frac{g_1}{f}, \quad q_2=\frac{g_2}{f}\,,
\end{equation}
and hodograph transformation
\begin{equation}
x = y -2(\ln f)_s\,, \quad t=-s \,,  \label{CCSP_hodograph}
\end{equation}
\end{proposition}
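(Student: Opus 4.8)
The plan is to follow the same route as the proof of the scalar case (the Proposition for the CSP equation), now carrying the two components $g_1,g_2$ through in parallel. First I would divide the bilinear equations (\ref{CCSPE_bilinear1})--(\ref{CCSPE_bilinear2}) by $f^2$ and use the standard identities relating the Hirota operators to logarithmic derivatives, which recasts them as
\begin{equation}
\left(\frac{g_i}{f}\right)_{sy}+2\frac{g_i}{f}(\ln f)_{sy}=\frac{g_i}{f}\,,\quad i=1,2\,,\qquad (\ln f)_{ss}=\frac14\frac{|g_1|^2+|g_2|^2}{f^2}\,.
\end{equation}
Substituting the dependent variable transformation (\ref{CCSPE_vartrf}) turns the second of these into $(\ln f)_{ss}=\frac14(|q_1|^2+|q_2|^2)$, which is the crucial quantity linking the hodograph change of variables to the nonlinear coupling term.

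Next I would differentiate the hodograph transformation (\ref{CCSP_hodograph}) to obtain
\[
\frac{\partial x}{\partial s}=-2(\ln f)_{ss}=-\frac12(|q_1|^2+|q_2|^2)\,,\qquad \frac{\partial x}{\partial y}=1-2(\ln f)_{sy}\,.
\]
Writing $1-2(\ln f)_{sy}=\rho^{-1}$ and inverting the Jacobian of the map $(y,s)\mapsto(x,t)$ yields the operator relations
\begin{equation}
\partial_y=\rho^{-1}\partial_x\,,\qquad \partial_s=-\partial_t-\frac12(|q_1|^2+|q_2|^2)\partial_x\,.
\end{equation}
With these in hand, the first bilinear equation for each $i$ can be rewritten as $\rho\,(q_i)_{sy}=q_i$, and applying the operator relations converts it into
\[
\partial_x\left(-\partial_t-\frac12(|q_1|^2+|q_2|^2)\partial_x\right)q_i=q_i\,,
\]
which is precisely the $i$-th component of the coupled complex short pulse equation (\ref{CCSP1})--(\ref{CCSP2}).

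The only genuinely new point compared with the scalar computation, and the step I would check most carefully, is the \emph{simultaneous} consistency of the single hodograph transformation across both components. Because $x=y-2(\ln f)_s$ involves only the common tau-function $f$, there is one change of variables and hence one operator $\partial_s=-\partial_t-\frac12(|q_1|^2+|q_2|^2)\partial_x$ shared by both equations, and the coupling term it carries is exactly the quantity produced by the single equation (\ref{CCSPE_bilinear2}). The proof therefore hinges on verifying that this same $\rho$ and this same nonlinearity linearize both the $i=1$ and $i=2$ equations at once; this is guaranteed by the fact that $|g_1|^2+|g_2|^2$ enters (\ref{CCSPE_bilinear2}) symmetrically and matches the coupling in each component of (\ref{CCSP1})--(\ref{CCSP2}). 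Everything else is the routine bilinear-identity bookkeeping already performed in the scalar case.
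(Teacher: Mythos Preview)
Your proposal is correct and follows essentially the same approach as the paper's own proof: divide the bilinear equations by $f^2$, read off $(\ln f)_{ss}$ and $1-2(\ln f)_{sy}=\rho^{-1}$ from the hodograph transformation, derive the operator identities $\partial_y=\rho^{-1}\partial_x$ and $\partial_s=-\partial_t-\tfrac12(|q_1|^2+|q_2|^2)\partial_x$, and convert $\rho\,(q_i)_{sy}=q_i$ into the coupled CSP system. Your explicit remark on the consistency of the single hodograph map across both components is a welcome clarification, but the argument itself is the paper's.
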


\begin{proof}
Dividing both sides of Eqs. (\ref{CCSPE_bilinear1})--(\ref{CCSPE_bilinear2})
by $f^2$, we have
\begin{equation}
\left(\frac{g_i}{f} \right)_{sy} + 2\frac{g_i}{f} \left( \ln f\right)_{sy} =
\frac{g_i}{f}\,,  \label{CCSP_BL2a}
\end{equation}
\begin{equation}
\left( \ln f\right)_{ss} =\frac{1}{4} \left( \frac{|g_1|^2}{f^2}+\frac{%
|g_2|^2}{f^2} \right)\,.  \label{CCSP_BL2b}
\end{equation}

From dependent variable and hodograph transformations (\ref{CCSPE_vartrf})--(%
\ref{CCSP_hodograph}), we obtain
\[
\frac{\partial x}{\partial s} = -2(\ln f)_{ss} = -\frac 12
\left(|q_1|^2+|q_2|^2 \right)\,, \qquad \frac{\partial x}{\partial y} =
1-2(\ln f)_{sy}\,,
\]
which implies
\begin{equation}  \label{CCSP_BL3}
{\partial_y} = \rho^{-1} {\partial_x}\,, \qquad {\partial_s} = -{\partial_t}
- \frac 12 \left(|q_1|^2+|q_2|^2 \right) {\partial_x}\,
\end{equation}
by letting $1-2(\ln f)_{sy} = \rho^{-1}$.

With the use of (\ref{CCSP_BL3}), Eq. (\ref{CCSP_BL2a}) can be recast into
\begin{equation}  \label{CCSP_BL4}
\rho \left(\frac{g_i}{f} \right)_{sy} = \frac{g_i}{f}\,, \quad i=1,2\,,
\end{equation}
which can be further converted into
\begin{equation}  \label{CCSPE_alt}
\partial_x \left(-\partial_t - \frac 12 (|q_1|^2+|q_2|^2) \partial_x
\right)q_i = q_i\,, \quad i=1,2\,.
\end{equation}
Eq. (\ref{CCSPE_alt}) is, obviously, equivalent to the coupled complex short
pulse equation (\ref{CCSP1})--(\ref{CCSP2}).
\end{proof}


$N$-soliton solution for the coupled complex short pulse equation is given
in a similar way as the complex short pulse equation by the following
theorem. 


\begin{theorem}
The coupled complex short pulse equation admits the following $N$-soliton
solution
\[
q_i=\frac{g_i}{f}, \quad x = y -2(\ln f)_s\,, \quad t=-s \,,
\]
where $f$, $g_i$ are pfaffians defined as
\begin{eqnarray}  \label{CCSPE_Nsoliton1}
f &=& \mathrm{Pf} (a_1, \cdots, a_{2N}, b_1, \cdots, b_{2N})\,, \\
g_i &=& \mathrm{Pf} (d_0, \beta_{i}, a_1, \cdots, a_{2N}, b_1, \cdots,
b_{2N})\,,  \label{CCSPE_Nsoliton2}
\end{eqnarray}
and the elements of the pfaffians are determined as
\begin{equation}  \label{NCSPE_pf1}
\mathrm{Pf}(a_j,a_k)= \frac{p_j-p_k}{p_j+p_k} e^{\eta_j+\eta_k}\,, \quad
\mathrm{Pf}(a_j,b_k)=\delta_{j,k}\,,
\end{equation}
\begin{equation}  \label{NCSPE_pf2}
\mathrm{Pf}(b_j,b_k)=\frac 14 \frac{\sum^2_{i=1} \alpha^{(i)}_j
\alpha^{(i)}_k }{p^{-2}_j-p^{-2}_{k}} \delta_{\mu+1, \nu}\,, \quad \mathrm{Pf%
}(d_l,a_k)= p_k^{l} e^{\eta_k}\,,
\end{equation}

\begin{equation}  \label{NCSPE_pf4}
\mathrm{Pf}(b_j,\beta_i)=\alpha^{(i)}_j \delta_{\mu, i}\,,\quad \mathrm{Pf}%
(d_0,b_j) =\mathrm{Pf}(d_0,\beta_i) = \mathrm{Pf}(a_j,\beta_i)=0\,.
\end{equation}
Here $\mu=index(b_j)$, $\nu=index(b_k)$, $\eta_j=p_j y + p_j^{-1} s +
\eta_{j,0}$ which satisfying $p_{j+N}=\bar{p}_j$, $\alpha_{j+N}=\bar{\alpha}%
_{j}$.
\end{theorem}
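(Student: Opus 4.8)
The plan is to verify that the pfaffians $f$ and $g_i$ of (\ref{CCSPE_Nsoliton1})--(\ref{CCSPE_Nsoliton2}) satisfy the bilinear equations (\ref{CCSPE_bilinear1})--(\ref{CCSPE_bilinear2}); by the preceding Proposition this is equivalent to the assertion of the theorem. Throughout write $(\cdot)$ for the common string $a_1,\ldots,a_{2N},b_1,\ldots,b_{2N}$, so that $f=\mathrm{Pf}(\cdot)$ and $g_i=\mathrm{Pf}(d_0,\beta_i,\cdot)$. The device I would use, exactly as in the treatment of the coupled modified KdV equation \cite{IwaoHirota}, is to realize the $y$- and $s$-derivatives as pfaffians with adjoined ``derivative entries'' $d_l$. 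Because $\eta_j=p_jy+p_j^{-1}s+\eta_{j,0}$ gives $\partial_y e^{\eta_j}=p_je^{\eta_j}$ and $\partial_s e^{\eta_j}=p_j^{-1}e^{\eta_j}$, while $\mathrm{Pf}(d_l,a_k)=p_k^{l}e^{\eta_k}$ and $\mathrm{Pf}(d_l,b_k)=\mathrm{Pf}(d_l,d_m)=0$, differentiating an entry $\mathrm{Pf}(a_j,a_k)$ amounts to inserting a $d$-pair, whereas differentiating an entry $\mathrm{Pf}(d_l,a_k)$ shifts the label $d_l\mapsto d_{l\mp1}$. Since a pfaffian carrying two coincident entries $d_l,d_l$ has two proportional rows and therefore vanishes, the Leibniz expansion collapses and one obtains, up to the signs fixed by the ordering convention in each pfaffian,
\[
f_y=\mathrm{Pf}(d_0,d_1,\cdot),\quad f_s=\mathrm{Pf}(d_0,d_{-1},\cdot),\quad f_{sy}=\mathrm{Pf}(d_{-1},d_1,\cdot),\quad f_{ss}=\mathrm{Pf}(d_0,d_{-2},\cdot),
\]
\[
g_{i,y}=\mathrm{Pf}(d_1,\beta_i,\cdot),\quad g_{i,s}=\mathrm{Pf}(d_{-1},\beta_i,\cdot),\quad g_{i,sy}=g_i+\mathrm{Pf}(d_{-1},d_0,d_1,\beta_i,\cdot).
\]

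I would first dispose of (\ref{CCSPE_bilinear1}). Expanding the Hirota operator gives $D_sD_yf\cdot g_i-fg_i=f_{sy}g_i-f_sg_{i,y}-f_yg_{i,s}+fg_{i,sy}-fg_i$, and the piece $g_i$ inside $g_{i,sy}$ cancels $-fg_i$, leaving a combination of the four products $f_{sy}g_i$, $f_sg_{i,y}$, $f_yg_{i,s}$ and $f\,\mathrm{Pf}(d_{-1},d_0,d_1,\beta_i,\cdot)$. With the formulas above this is exactly the identically vanishing combination furnished by the fundamental pfaffian identity
\[
\mathrm{Pf}(w,x,y,z,\cdot)\,\mathrm{Pf}(\cdot)=\mathrm{Pf}(w,x,\cdot)\mathrm{Pf}(y,z,\cdot)-\mathrm{Pf}(w,y,\cdot)\mathrm{Pf}(x,z,\cdot)+\mathrm{Pf}(w,z,\cdot)\mathrm{Pf}(x,y,\cdot)
\]
applied to the four adjoined entries $(w,x,y,z)=(d_{-1},d_0,d_1,\beta_i)$, so (\ref{CCSPE_bilinear1}) holds identically. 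It is worth stressing that this step uses only the pairing rules $\mathrm{Pf}(d_l,a_k)=p_k^le^{\eta_k}$ and $\mathrm{Pf}(a_j,\beta_i)=\mathrm{Pf}(d_0,\beta_i)=0$; the actual values of $\mathrm{Pf}(b_j,b_k)$ and $\mathrm{Pf}(b_j,\beta_i)$ play no role here.

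The substance of the theorem lies in (\ref{CCSPE_bilinear2}). From the derivative formulas, $\frac{1}{2}D_s^2f\cdot f=f_{ss}f-f_s^2=\mathrm{Pf}(d_0,d_{-2},\cdot)\mathrm{Pf}(\cdot)-\mathrm{Pf}(d_0,d_{-1},\cdot)^2$, and the aim is to identify this with $\frac{1}{4}(|g_1|^2+|g_2|^2)$. The reality reductions $p_{j+N}=\bar p_j$ and $\alpha^{(i)}_{j+N}=\bar\alpha^{(i)}_j$ enter decisively here: complex conjugation acts on all entries as the index shift $j\mapsto j+N$, so that $f$ is real and $\bar g_i$ is again a pfaffian obtained from $g_i$ by this shift. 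The plan is to expand both $f_{ss}f-f_s^2$ and $\frac{1}{4}\sum_i g_i\bar g_i$ through pfaffian identities and to match them term by term: the products $\alpha^{(i)}_j\bar\alpha^{(i)}_k$ produced by pairing $\beta_i$ in $g_i$ against its conjugate partner in $\bar g_i$ are exactly the numerators $\sum_{i}\alpha^{(i)}_j\alpha^{(i)}_k$ prescribed in $\mathrm{Pf}(b_j,b_k)$, while $\mathrm{Pf}(b_j,\beta_i)=\alpha^{(i)}_j\delta_{\mu,i}$ together with the selection rule $\delta_{\mu+1,\nu}$ routes each contribution to the correct component. The summation over $i=1,2$ built into $\mathrm{Pf}(b_j,b_k)$ is what reproduces both squares $|g_1|^2$ and $|g_2|^2$.

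I expect this last identification to be the main obstacle. In contrast to (\ref{CCSPE_bilinear1}), equation (\ref{CCSPE_bilinear2}) is precisely the one that forces the choices of the pairings $\mathrm{Pf}(b_j,b_k)$ and $\mathrm{Pf}(b_j,\beta_i)$, and the verification requires careful bookkeeping of signs, of the index flip $j\mapsto j+N$ induced by conjugation, and of the two selection rules $\delta_{\mu+1,\nu}$ and $\delta_{\mu,i}$, so that all cross terms cancel and the surviving terms assemble into the two perfect squares. Since the scalar theorem is the special case $n=1$ settled in the Appendix by this same chain of pfaffian identities, I would organize the computation so that the coupled case $n=2$ (and, via the recursively defined matrices of Section~3, the vector case) follows by the identical argument, the sole change being the replacement of the scalar numerator $\alpha_j\alpha_k$ by $\sum_{i}\alpha^{(i)}_j\alpha^{(i)}_k$.
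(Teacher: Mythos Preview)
Your proposal is correct and follows essentially the same route as the paper's own proof. You reduce the theorem to the two bilinear equations, realize the $y$- and $s$-derivatives of $f$ and $g_i$ as pfaffians with adjoined $d_l$ entries, dispatch (\ref{CCSPE_bilinear1}) by the four-term pfaffian identity on $(d_{-1},d_0,d_1,\beta_i)$, and for (\ref{CCSPE_bilinear2}) you correctly identify the key mechanism: expanding $g_i\bar g_i$ on the $\beta_i$ entries produces exactly the numerator $\sum_i\alpha^{(i)}_j\alpha^{(i)}_k$ appearing in $\mathrm{Pf}(b_j,b_k)$, so that the coupled case differs from the scalar one only by that replacement---which is precisely how the paper organizes its Appendix proof of Theorem~4.6 by reference to Theorem~4.2.
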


The proof of the Theorem is given in the Appendix. In the subsequent
section, based on the $N$-soliton solution of coupled complex short pulse
equation, we will investigate the dynamics of one- and two-solitons in
details.

\begin{remark}
Through the transformations
\begin{equation}  \label{NCSPE_trfs}
x = y -2(\ln f)_s\,, \quad t=-s \,, \quad q_i=\frac{g_i}{f}\,,
\end{equation}
the vector complex short pulse equation (\ref{NCSPE}) can be decomposed into
the following bilinear equations
\begin{equation}  \label{NCSPE_bilinear1}
D_sD_y f \cdot g_i =fg_i, \quad i=1, \cdots, n \,,
\end{equation}
\begin{equation}  \label{NCSPE_bilinear2}
D^2_s f \cdot f =\frac{1}{2} \left(\sum^n_{i=1}|g_i|^2\right)\,.
\end{equation}
The parametric form of $N$-soliton solution in terms of pfaffians to the
vector complex short pulse equation (\ref{NCSPE}) can be given in a very
similar from as to to the coupled complex short pulse equation. Here, we
omit the details and will report the results later on.
\end{remark}

\section{Dynamics of solitons to the coupled complex short pulse equation}
\subsection{One-soliton solution}
The tau-functions for one-soliton solution to the coupled complex short
pulse equation are obtained from (\ref{CCSPE_Nsoliton1})--(\ref%
{CCSPE_Nsoliton2}) for $N=1$
\begin{equation}
f = -1-\frac 14 \frac {\sum_{i=1}^2|\alpha^{(i)}_1|^2(p_1\bar{p}_1)^2}{(p_1+%
\bar{p}_1)^2} e^{\eta_1+\bar{\eta}_1} \,,
\end{equation}
\begin{equation}
g_1 = -\alpha^{(1)}_1 e^{\eta_1}\,, \quad g_2 = -\alpha^{(2)}_1 e^{\eta_1}\,.
\end{equation}

Let $p_{1}=p_{1R}+\mathrm{i}p_{1I}$, the one-soliton solution can be
expressed in the following parametric form
\begin{equation}
\left(
\begin{array}{c}
q_{1} \\
q_{2}%
\end{array}%
\right) =\left(
\begin{array}{c}
A_{1} \\
A_{2}%
\end{array}%
\right) \frac{2p_{1R}}{|p_{1}|^{2}}e^{\mathrm{i}\eta _{1I}}{\mbox{sech}}%
\left( \eta _{1R}+\eta _{10}\right) \,,  \label{1soliton_ay}
\end{equation}%
\begin{equation}
x=y-\frac{2p_{1R}}{|p_{1}|^{2}}\left( \tanh (\eta _{1R}+\eta _{10})+1\right)
\,,\quad t=-s\,,  \label{CCSP1solitonb}
\end{equation}%
where
\begin{equation}
\eta _{1R}=p_{1R}\left( y+\frac{1}{|p_{1}|^{2}}s\right) ,\quad \eta
_{1I}=p_{1I}\left( y-\frac{1}{|p_{1}|^{2}}s\right) \,,
\end{equation}%
\begin{equation}
A_{i}=\frac{\alpha _{1}^{(i)}}{\sqrt{\sum_{i=1}^{2}|\alpha _{1}^{(i)}|^2}}%
\,,\quad \eta_{10}=\ln \frac{\sqrt{\sum_{i=1}^{2}|\alpha _{1}^{(i)}|^2}%
|p_{1}|^{2}}{4|p_{1R}|}\,.
\end{equation}%
The amplitudes of the single soliton in each component are ${2|A_{1}|p_{1R}}/%
{|p_{1}|^{2}}$ and ${2|A_{2}|p_{1R}}/{|p_{1}|^{2}}$, respectively. Note that
$|A_{1}|^{2}+|A_{2}|^{2}=1$. Same as the analysis for one-soliton solution
of complex short pulse equation, if $|p_{1R}|<|p_{1I}|$, the envelope for
one-soliton in each of the component is smooth, whereas, if $%
|p_{1R}|>|p_{1I}| $, it becomes a loop (multi-valued) soliton, if $%
|p_{1R}|=|p_{1I}|$, it is a cuspon.

\subsection{Soliton interactions}
Two-soliton solution for coupled complex short pulse equation is obtained
from (\ref{CCSPE_Nsoliton1})--(\ref{CCSPE_Nsoliton2}) for $N=2$. By
expanding the pfaffians, the tau-functions for two-soliton solution are
expressed by
\begin{eqnarray}
&&f=1+e^{\eta _{1}+\bar{\eta}_{1}+r_{1\bar{1}}}+e^{\eta _{1}+\bar{\eta}%
_{2}+r_{1\bar{2}}}+e^{\eta _{2}+\bar{\eta}_{1}+r_{2\bar{1}}}+e^{\eta _{2}+%
\bar{\eta}_{2}+r_{2\bar{2}}}  \nonumber \\
&&\qquad +|P_{12}|^{2}|P_{1\bar{2}}|^{2}P_{1\bar{1}}P_{2\bar{2}}\left( B_{1%
\bar{1}}B_{2\bar{2}}-B_{2\bar{1}}B_{1\bar{2}}\right) e^{\eta _{1}+\eta _{2}+%
\bar{\eta}_{1}+\bar{\eta}_{2}}\,,
\end{eqnarray}

\begin{eqnarray}
&& g_1= \alpha^{(1)}_{1} e^{\eta_1} + \alpha^{(1)}_2 e^{\eta_2} +P_{12} P_{1%
\bar{1}} P_{2\bar{1}} \left( \alpha^{(1)}_2 B_{1\bar{1}} - \alpha^{(1)}_1
B_{2\bar{1}} \right) e^{\eta_1+\eta_2+\bar{\eta}_1}  \nonumber \\
&& \qquad +P_{12} P_{1\bar{2}} P_{2\bar{2}} \left( \alpha^{(1)}_2 B_{1\bar{2}%
} - \alpha^{(1)}_1 B_{2\bar{2}} \right) e^{\eta_1+\eta_2+\bar{\eta}_2} \,,
\end{eqnarray}

\begin{eqnarray}
&&g_{2}=\alpha _{1}^{(2)}e^{\eta _{1}}+\alpha _{2}^{(2)}e^{\eta
_{2}}+P_{12}P_{1\bar{1}}P_{2\bar{1}}\left( \alpha _{2}^{(2)}B_{1\bar{1}%
}-\alpha _{1}^{(2)}B_{2\bar{1}}\right) e^{\eta _{1}+\eta _{2}+\bar{\eta}_{1}}
\nonumber \\
&&\qquad +P_{12}P_{1\bar{2}}P_{2\bar{2}}\left( \alpha _{2}^{(2)}B_{1\bar{2}%
}-\alpha _{1}^{(2)}B_{2\bar{2}}\right) e^{\eta _{1}+\eta _{2}+\bar{\eta}%
_{2}}\,,
\end{eqnarray}%
where
\[
P_{ij}=\frac{p_{i}-p_{j}}{p_{i}+p_{j}}\,,\quad P_{i\bar{j}}=\frac{p_{i}-\bar{%
p}_{j}}{p_{i}+\bar{p}_{j}}\,,
\]%
\[
B_{i\bar{j}}=\frac{\alpha _{i}^{(1)}\bar{\alpha}_{j}^{(1)}+\alpha _{i}^{(2)}%
\bar{\alpha}_{j}^{(2)}}{4(p_{i}^{-2}-\bar{p}_{j}^{-2})}\,,\quad e^{r_{i\bar{j%
}}}=\frac{\alpha _{i}^{(1)}\bar{\alpha}_{j}^{(1)}+\alpha _{i}^{(2)}\bar{%
\alpha}_{j}^{(2)}}{4(p_{i}^{-1}+\bar{p}_{j}^{-1})^2}\,.
\]%
and $\eta _{j}=p_{j}y+p_{j}^{-1}s$, $p_{3}=\bar{p}_{1}$, $p_{4}=\bar{p}_{2}$%
, thus, $\eta _{3}=\bar{\eta}_{1}$, $\eta _{4}=\bar{\eta}_{2}$.

Next, we investigate the asymptotic behavior of two-soliton solution. To
this end, we assume $p_{1R}>p_{2R}>0$, $p_{1R}/|p_{1}|^{2}>p_{2R}/|p_{2}|^{2}
$ without loss of generality. For the above choice of parameters, we have
(i) $\eta _{1R}\approx 0$, $\eta _{2R}\rightarrow \mp \infty $ as $%
t\rightarrow \mp \infty $ for soliton 1 and (ii) $\eta _{2R}\approx 0$, $%
\eta _{2R}\rightarrow \pm \infty $ as $t\rightarrow \mp \infty $ for soliton
2. This leads to the following asymptotic forms for two-soliton solution.
\newline
(i) Before collision ($t\rightarrow -\infty $) \newline
Soliton 1 ($\eta _{1R}\approx 0$, $\eta _{2R}\rightarrow -\infty $):
\begin{eqnarray}
\left(
\begin{array}{c}
q_{1} \\
q_{2}%
\end{array}%
\right)  &\rightarrow &\left(
\begin{array}{c}
\alpha _{1}^{(1)} \\
\alpha _{1}^{(2)}%
\end{array}%
\right) \frac{e^{\eta _{1}}}{1+e^{\eta _{1}+\bar{\eta}_{1}+r_{1\bar{1}}}}\,,
\nonumber  \label{soliton1_aybf} \\
&\rightarrow &\left(
\begin{array}{c}
A_{1}^{1-} \\
A_{2}^{1-}%
\end{array}%
\right) \frac{2p_{1R}}{|p_{1}|^{2}}e^{i\eta _{1I}}{\mbox{sech}}\left( \eta
_{1R}+\frac{r_{1\bar{1}}}{2}\right) \,,
\end{eqnarray}%
where
\begin{equation}
\left(
\begin{array}{c}
A_{1}^{1-} \\
A_{2}^{1-}%
\end{array}%
\right) =\left(
\begin{array}{c}
\alpha _{1}^{(1)} \\
\alpha _{1}^{(2)}%
\end{array}%
\right) \frac{1}{\sqrt{|\alpha _{1}^{(1)}|^{2}+|\alpha _{1}^{(2)}|^{2}}}\,.
\end{equation}

Soliton 2 ($\eta_{2R} \approx 0$, $\eta_{1R} \to \infty$):
\begin{equation}  \label{soliton2_aybf}
\left(%
\begin{array}{c}
q_1 \\
q_2%
\end{array}%
\right) \to \left(%
\begin{array}{c}
A^{2-}_1 \\
A^{2-}_2%
\end{array}%
\right)\frac{2p_{2R}}{|p_{2}|^2} e^{i\eta_{2I}} {\mbox{sech}}
\left(\eta_{2R}+\frac {r_{1\bar{1}2\bar{2}}-r_{1\bar{1}}}{2} \right)\,,
\end{equation}
where
\begin{equation}
\left(%
\begin{array}{c}
A^{2-}_1 \\
A^{2-}_2%
\end{array}%
\right) = \left(%
\begin{array}{c}
e^{r^{(1)}_{1\bar{1}2}} \\
e^{r^{(2)}_{1\bar{1}2}}%
\end{array}%
\right) \frac{e^{-(r_{1\bar{1}2\bar{2}}+r_{1\bar{1}}-r_{2\bar{2}})/{2}}} {%
\sqrt{|\alpha^{(1)}_2|^2+|\alpha^{(2)}_2|^2}} \,,
\end{equation}
with
\begin{equation}
e^{r^{(i)}_{1\bar{1}2}}= P_{12} P_{1\bar{1}} P_{2\bar{1}} \left(
\alpha^{(i)}_2 B_{1\bar{1}} - \alpha^{(i)}_1 B_{2\bar{1}} \right)\,, \quad
(i=1,2)
\end{equation}
\begin{equation}
e^{r_{1\bar{1}2\bar{2}}}=|P_{12}|^2 |P_{1\bar{2}}|^2 P_{1\bar{1}} P_{2\bar{2}%
} \left(B_{1\bar{1}} B_{2\bar{2}} -B_{2\bar{1}}B_{1\bar{2}}\right)\,.
\end{equation}
\newline
After collision ($t \to \infty$) \newline
Soliton 1 ($\eta_{1R} \approx 0$, $\eta_{2R} \to \infty$):

\begin{equation}
\left(
\begin{array}{c}
q_{1} \\
q_{2}%
\end{array}%
\right) \rightarrow \left(
\begin{array}{c}
A_{1}^{1+} \\
A_{2}^{1+}%
\end{array}%
\right) \frac{2p_{1R}}{|p_{1}|^{2}}e^{i\eta _{1I}}{\mbox{sech}}\left( \eta
_{2R}+\frac{r_{1\bar{1}2\bar{2}}-r_{2\bar{2}}}{2}\right) \,,
\label{soliton2_ayafter}
\end{equation}%
where
\begin{equation}
\left(
\begin{array}{c}
A_{1}^{1+} \\
A_{2}^{1+}%
\end{array}%
\right) =\left(
\begin{array}{c}
e^{r_{12\bar{1}}^{(1)}} \\
e^{r_{12\bar{1}}^{(2)}}%
\end{array}%
\right) \frac{e^{-(r_{1\bar{1}2\bar{2}}-r_{1\bar{1}}+r_{2\bar{2}})/{2}}}{%
\sqrt{|\alpha _{1}^{(1)}|^{2}+|\alpha _{1}^{(2)}|^{2}}}\,,
\end{equation}%
with
\begin{equation}
e^{r_{12\bar{1}}^{(i)}}=P_{12}P_{1\bar{2}}P_{2\bar{2}}\left( \alpha
_{2}^{(i)}B_{1\bar{2}}-\alpha _{1}^{(i)}B_{2\bar{2}}\right) \,,\quad
(i=1,2)\,.
\end{equation}

Soliton 2 ($\eta _{2R}\approx 0$, $\eta _{1R}\rightarrow -\infty $):
\begin{equation}
\left(
\begin{array}{c}
q_{1} \\
q_{2}%
\end{array}%
\right) \rightarrow \left(
\begin{array}{c}
A_{1}^{2+} \\
A_{2}^{2+}%
\end{array}%
\right) \frac{2p_{2R}}{|p_{2}|^{2}}e^{i\eta _{2I}}{\mbox{sech}}\left( \eta
_{2R}+\frac{r_{2\bar{2}}}{2}\right) \,,  \label{soliton2_ayaf}
\end{equation}%
where
\begin{equation}
\left(
\begin{array}{c}
A_{1}^{2+} \\
A_{2}^{2+}%
\end{array}%
\right) =\left(
\begin{array}{c}
\alpha _{2}^{(1)} \\
\alpha _{2}^{(2)}%
\end{array}%
\right) \frac{1}{\sqrt{|\alpha _{2}^{(1)}|^{2}+|\alpha _{2}^{(2)}|^{2}}}\,.
\end{equation}

Similar to the analysis for the CNLS equations \cite%
{Lakshmanan1997,Lakshmanan2001,Lakshmanan2003}, the change in the amplitude
of each of the solitons in each component can be obtained by introducing the
transition matrix $T^k_j$ by $A_j^{k+}= T_j^k A_j^{k-}$, $j,k=1,2$. The
elements of transition matrix is obtained from the above asymptotic analysis
as
\begin{equation}  \label{trasition_matrix1}
T_j^1=\left(\frac{P_{12}P_{1\bar{2}}}{\bar{P}_{12}\bar{P}_{1\bar{2}}}
\right)^{1/2} \frac{1}{\sqrt{1-\lambda_1 \lambda_2}} \left(1-\lambda_2\frac{%
\alpha_2^{(j)}}{\alpha_1^{(j)}} \right)\,, \quad j=1,2\,,
\end{equation}
\begin{equation}  \label{trasition_matrix2}
T_j^2=\left(\frac{\bar{P}_{12}P_{1\bar{2}}}{P_{12}\bar{P}_{1\bar{2}}}
\right)^{1/2} \sqrt{1-\lambda_1 \lambda_2} \left(1-\lambda_1\frac{%
\alpha_1^{(j)}}{\alpha_2^{(j)}} \right)^{-1}\,, \quad j=1,2\,,
\end{equation}
where $\lambda_1=B_{2\bar{1}}/B_{1\bar{1}}$, $\lambda_2=B_{1\bar{2}}/B_{2%
\bar{2}}$.

Therefore, in general, there is an exchange of energies between two components of two solitons
after the collision.
An example is shown in Fig. 4 for the
parameters taken as follows 
$p_{1}=1+1.2\mathrm{i}$, $p_{2}=1+2\mathrm{i}$, $\alpha^{(1)}_{1}=%
\alpha^{(2)}_{1}=1.0$, $\alpha^{(1)} _{2}=2.0$, $\alpha^{(2)}_{2}=1.0$.
\begin{figure}[tbph]
\centerline{
\includegraphics[scale=0.35]{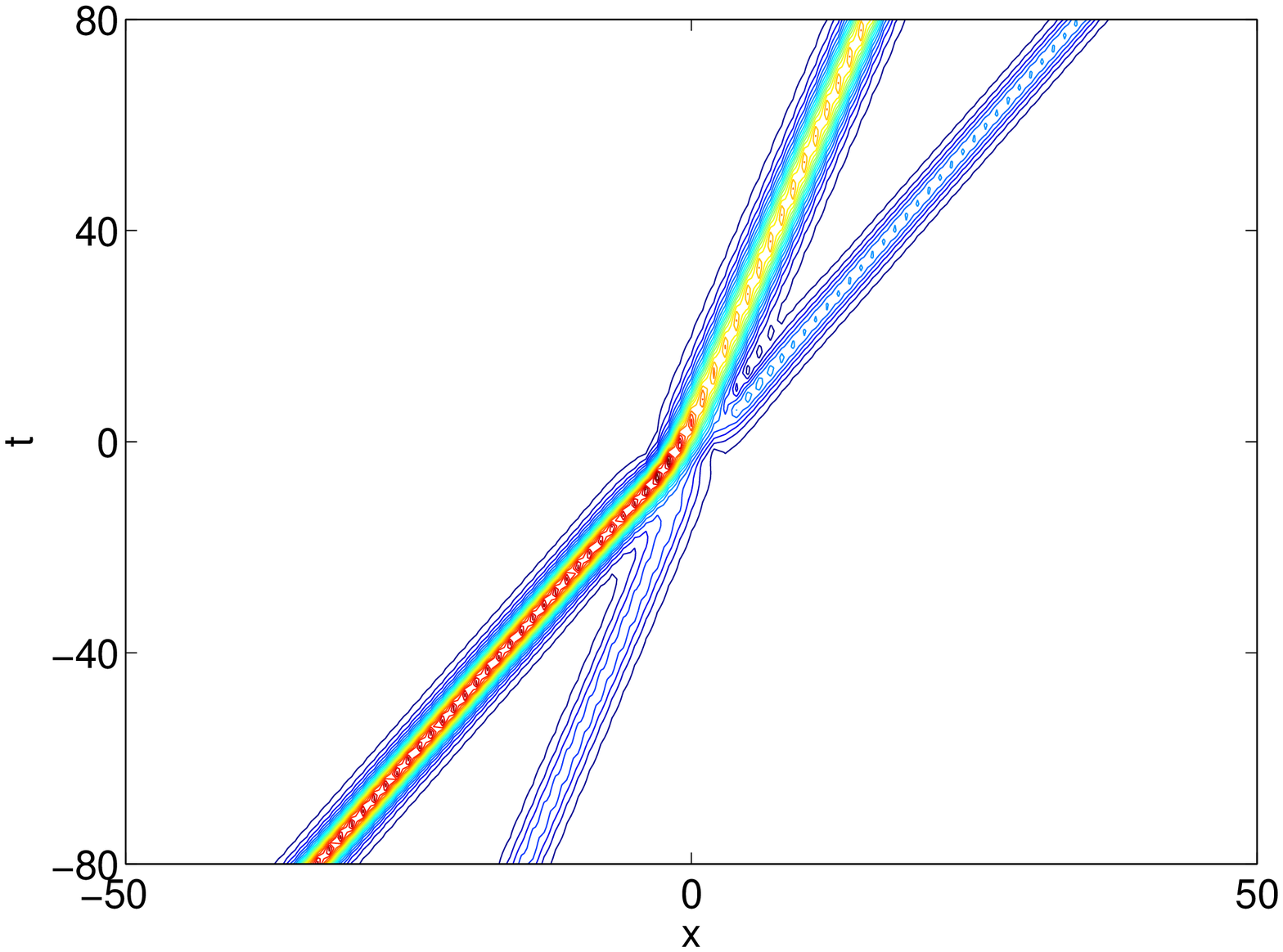}\quad
\includegraphics[scale=0.35]{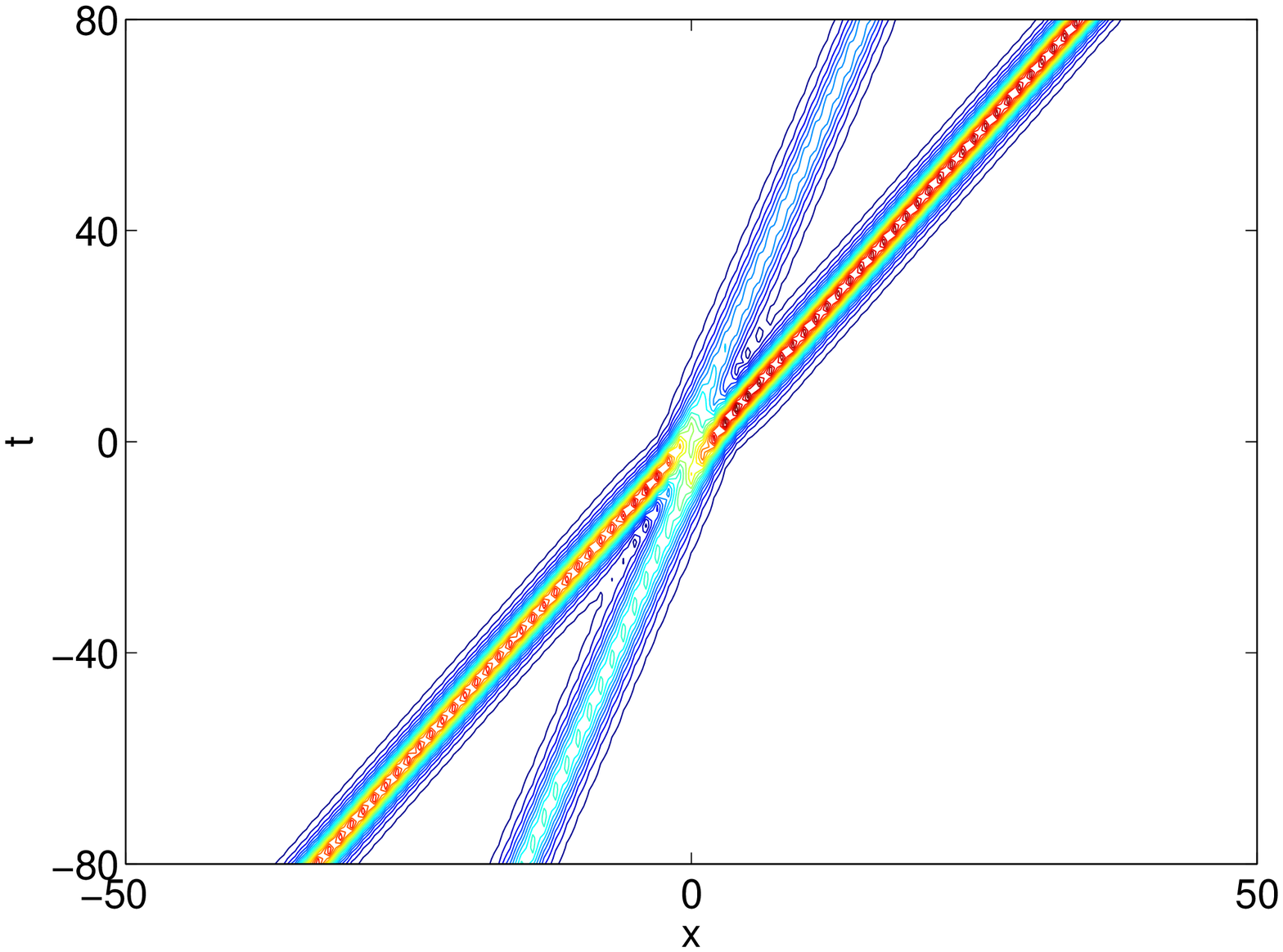}} \kern-0.315\textwidth
\hbox to
\textwidth{\hss(a)\kern6.5em\hss(b)\kern-1.5em} \kern+0.315\textwidth
\centerline{
\includegraphics[scale=0.35]{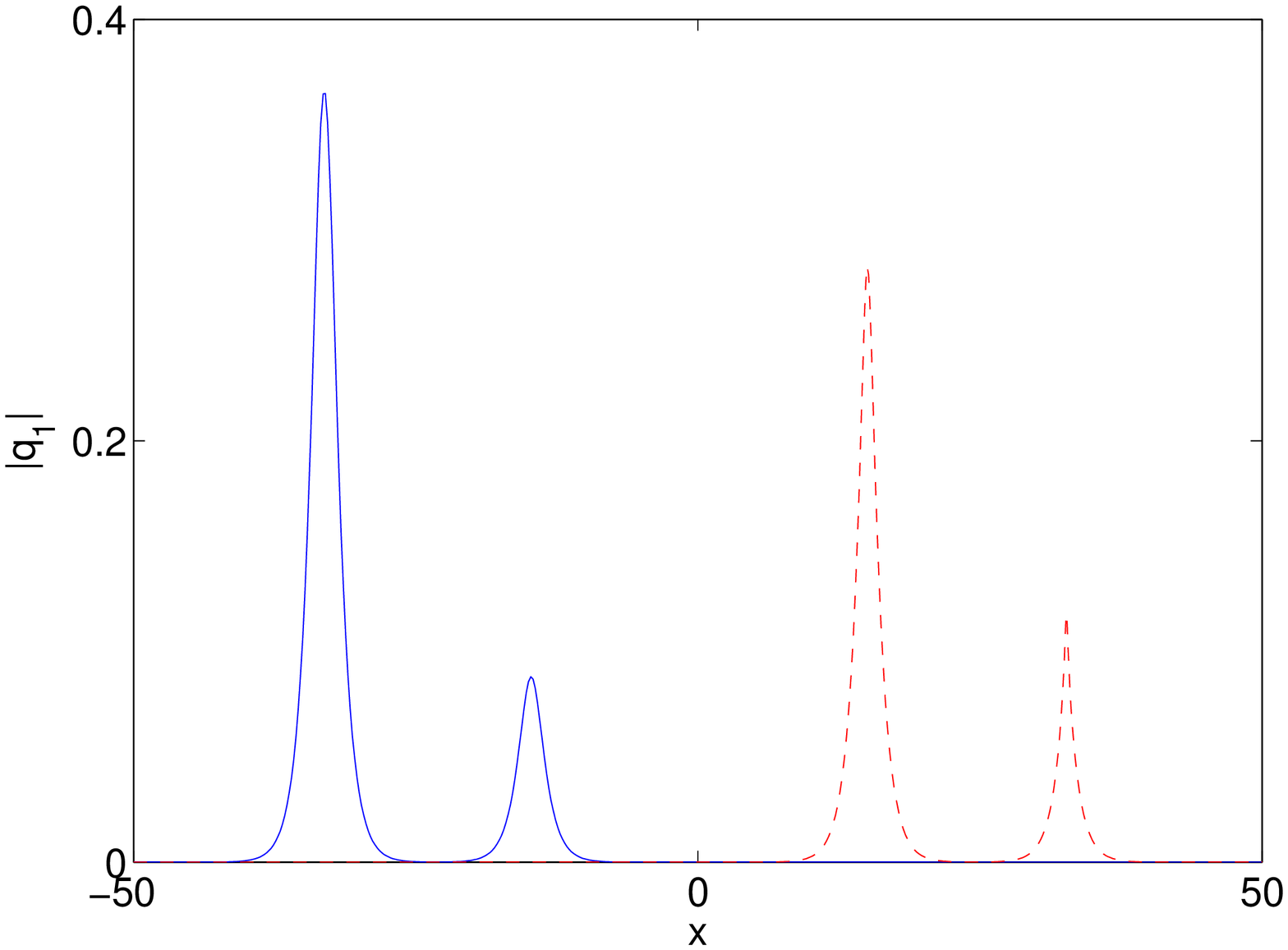}\quad
\includegraphics[scale=0.35]{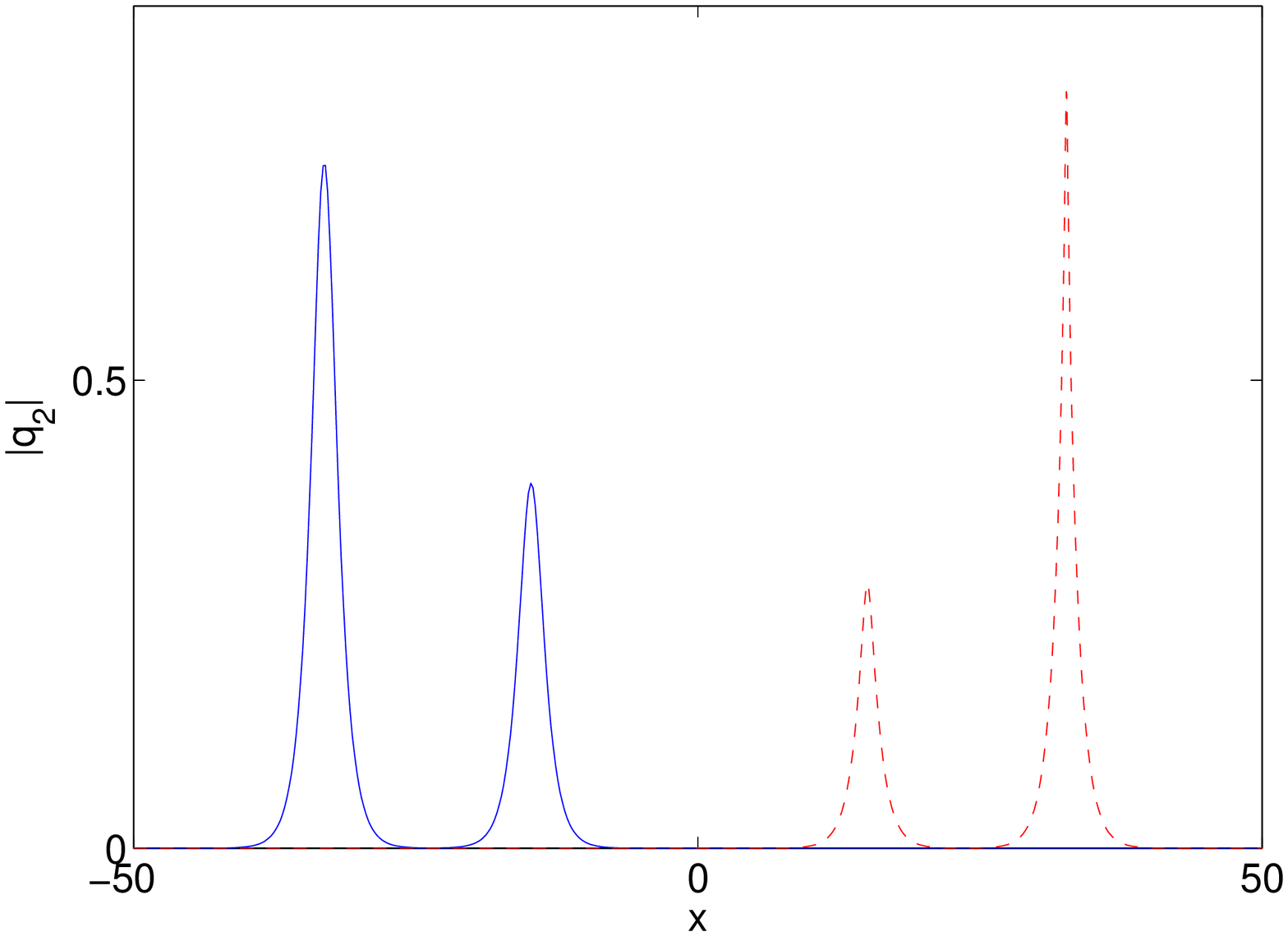}} \kern-0.315\textwidth
\hbox to
\textwidth{\hss(c)\kern6.5em\hss(d)\kern-1.5em} \kern+0.315\textwidth
\caption{Inelastic collision in coupled complex short
pulse equation. (a)-(b): contour plot; (c)-(d): profiles before and after the collision.}
\label{f:inelastic1}
\end{figure}
However, only for the special case
\begin{equation}
\frac{\alpha _{1}^{(1)}}{\alpha _{2}^{(1)}}=\frac{\alpha _{1}^{(2)}}{\alpha
_{2}^{(2)}}\,,
\end{equation}%
there is no energy exchange between two compoents of solitons
after the collision.  An example is shown in Fig. 5 for the
parameters 
$p_{1}=1+1.2\mathrm{i}$,\ $p_{2}=1+2\mathrm{i}$,\ $\alpha^{(1)}_{1}=%
\alpha^{(2)}_{1}=1.0$, \ $\alpha^{(1)}_{2}=\alpha^{(2)}_{2}=1.0$.
\begin{figure}[htbp]
\centerline{
\includegraphics[scale=0.35]{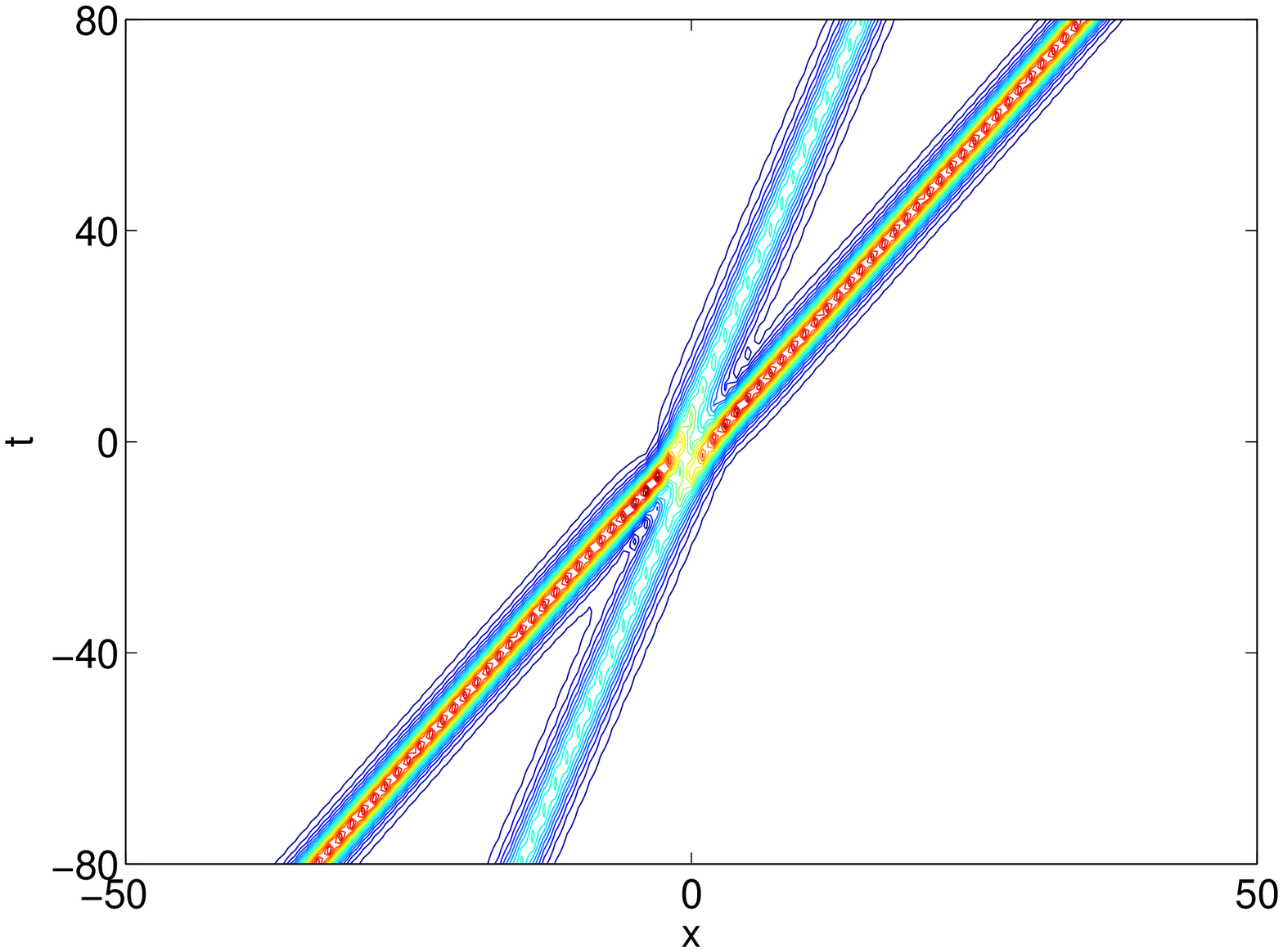}\quad
\includegraphics[scale=0.35]{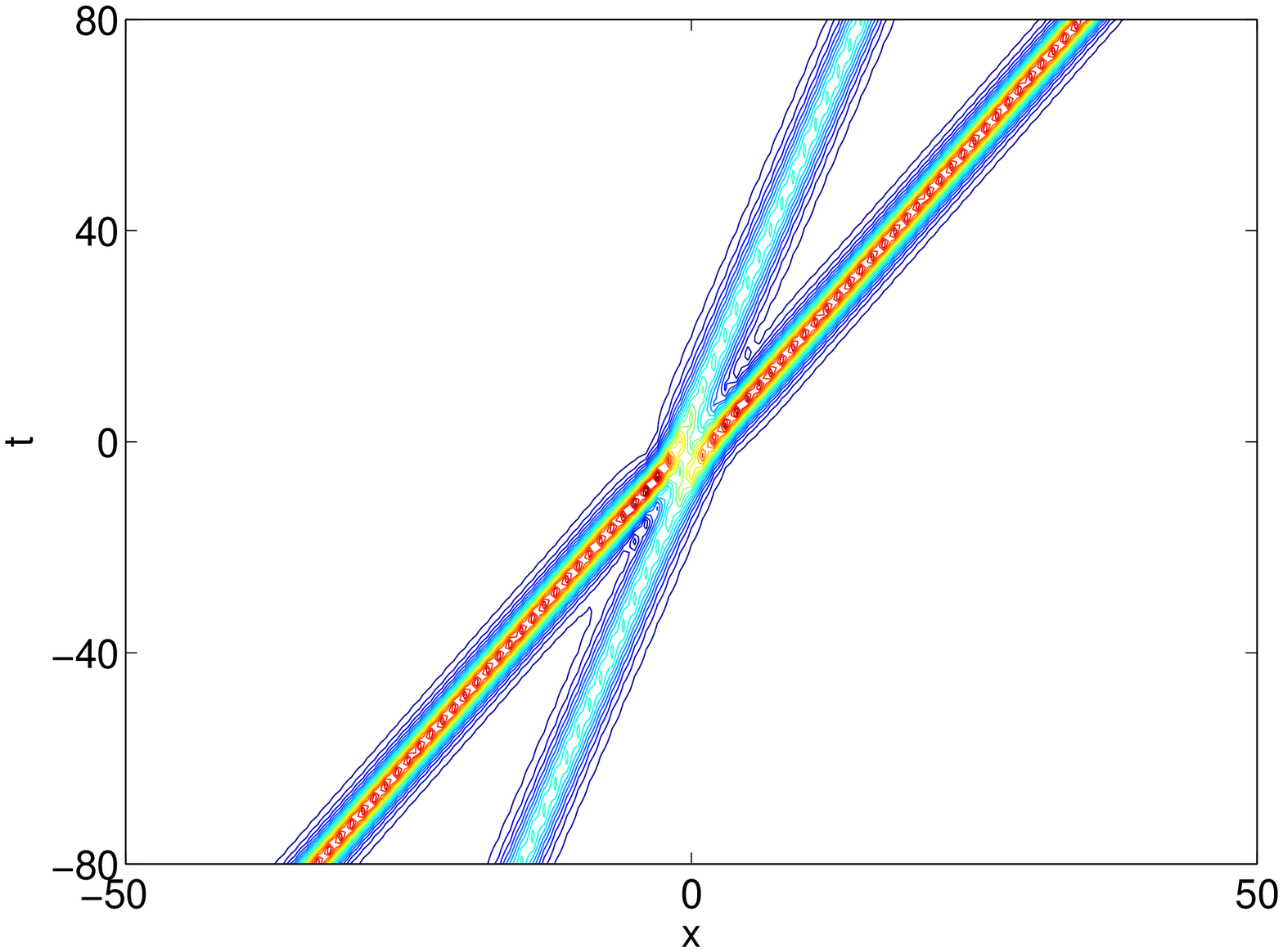}} \kern-0.315\textwidth
\hbox to
\textwidth{\hss(a)\kern6.5em\hss(b)\kern-1.5em} \kern+0.315\textwidth
\centerline{
\includegraphics[scale=0.35]{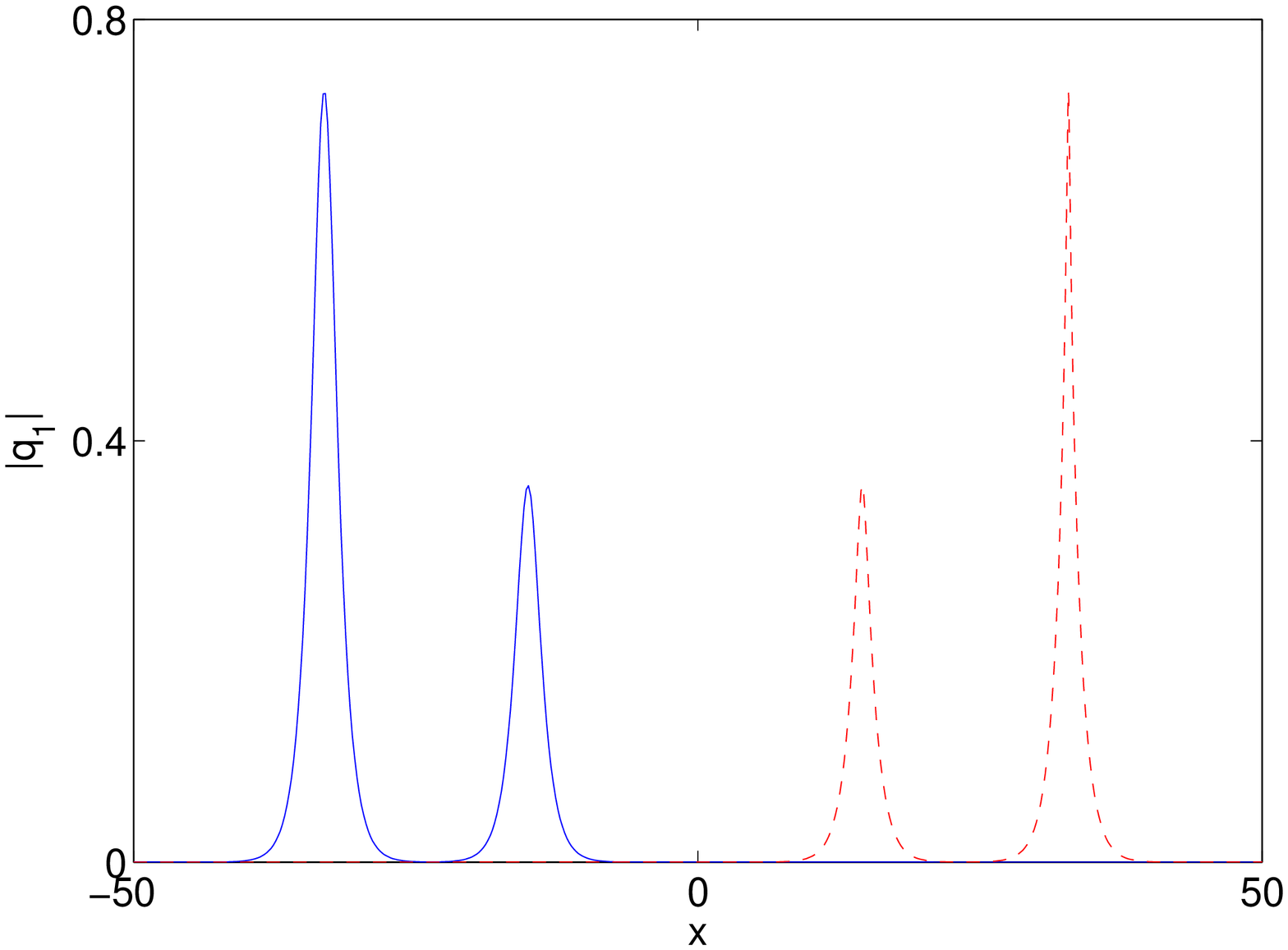}\quad
\includegraphics[scale=0.35]{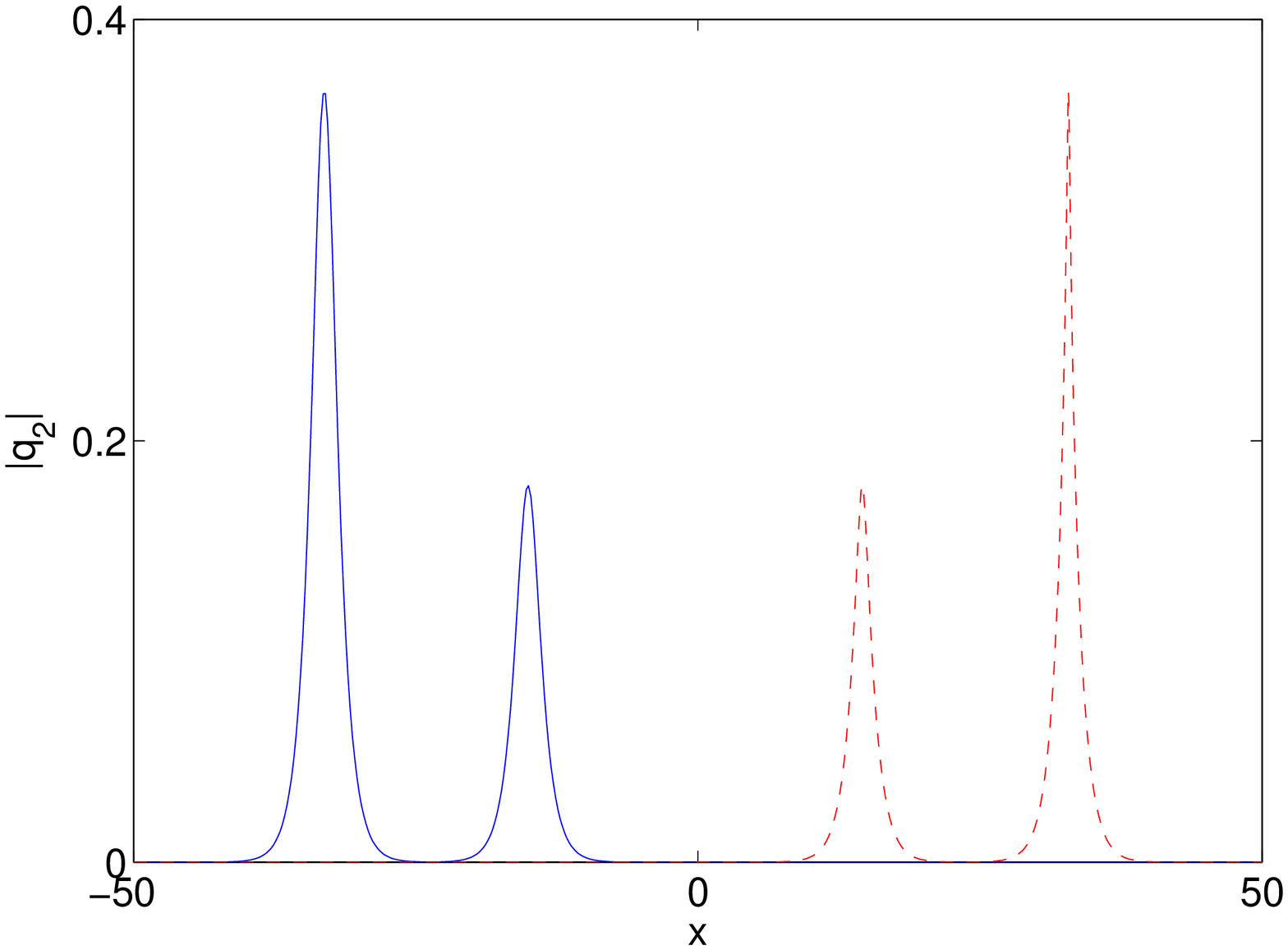}} \kern-0.315\textwidth
\hbox to
\textwidth{\hss(c)\kern6.5em\hss(d)\kern-1.5em} \kern+0.315\textwidth
\caption{Elastic collision in coupled complex short pulse equation.}
\label{f:elastic}
\end{figure}

It is interesting to note that if we just change the parameters in previous
two examples as $\alpha^{(1)}_{2}=0$, $\alpha^{(2)}_{2}=1.0$,
the energy of one soliton is concentrated in component $q_2$ before the collision. However, component $q_1$ gains some energy after the collision.
Such an example is shown in Fig. 6.

\begin{figure}[htbp]
\centerline{
\includegraphics[scale=0.35]{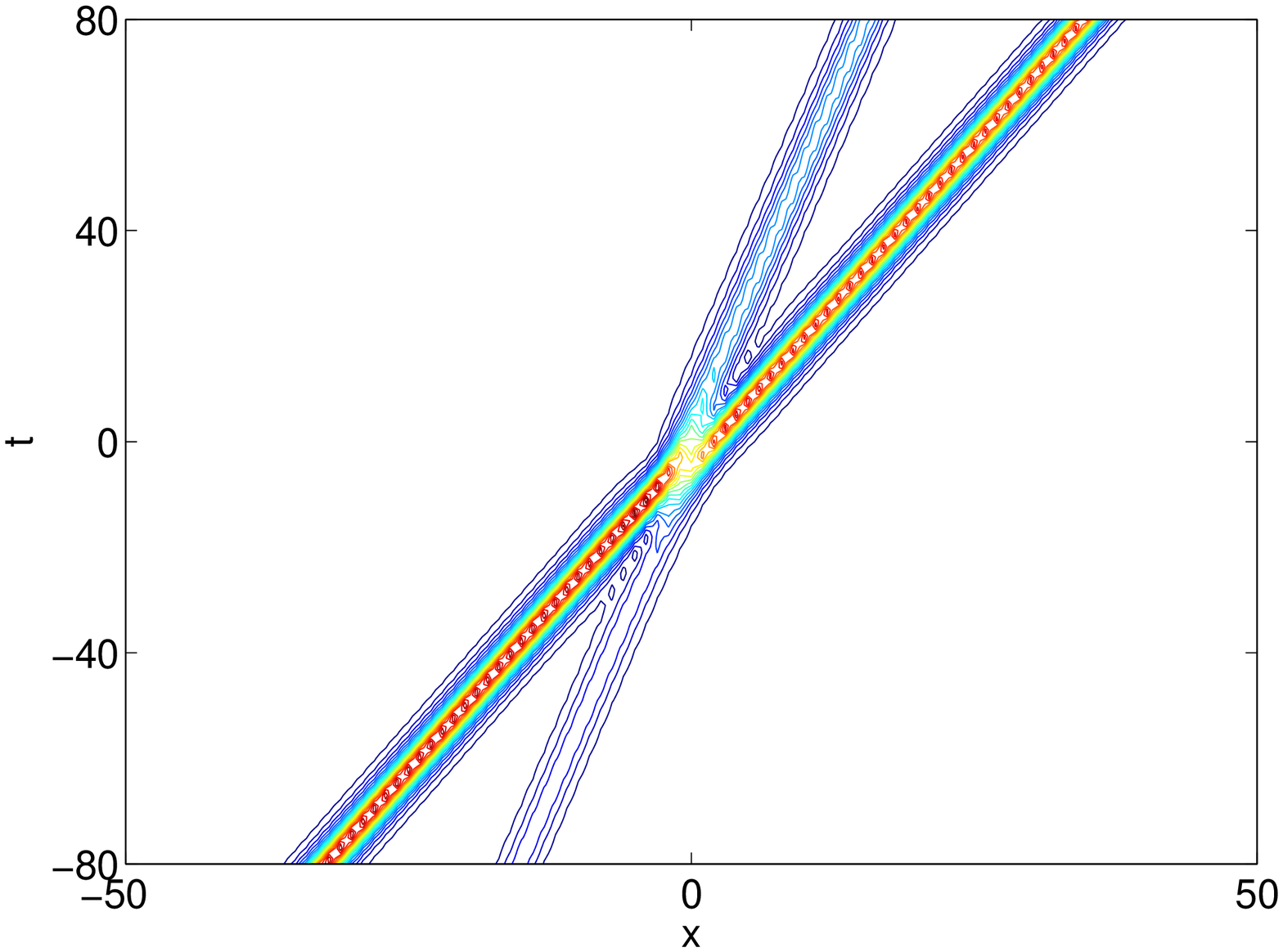}\quad
\includegraphics[scale=0.35]{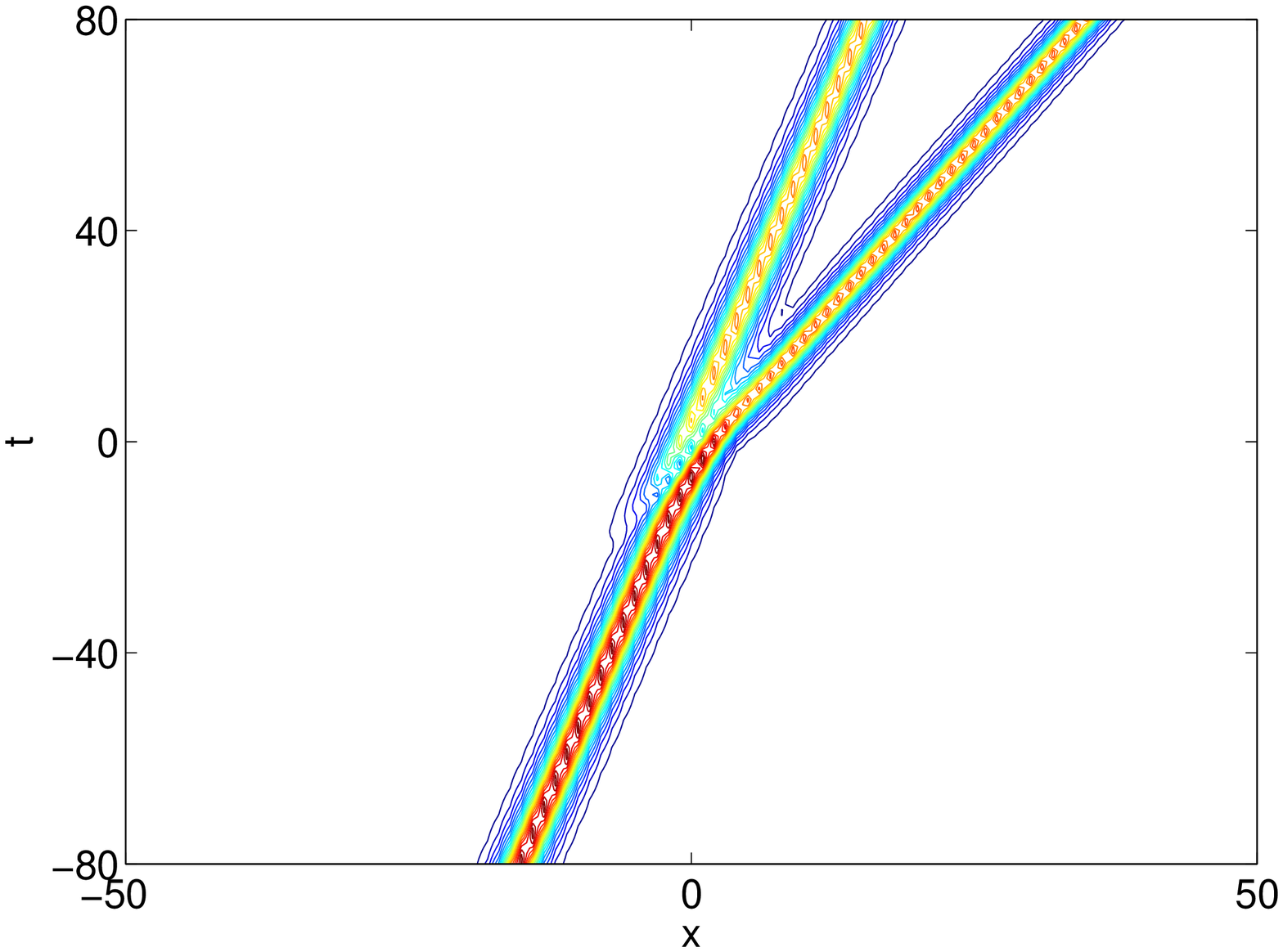}} \kern-0.315\textwidth
\hbox to
\textwidth{\hss(a)\kern6.5em\hss(b)\kern-1.5em} \kern+0.315\textwidth
\centerline{
\includegraphics[scale=0.35]{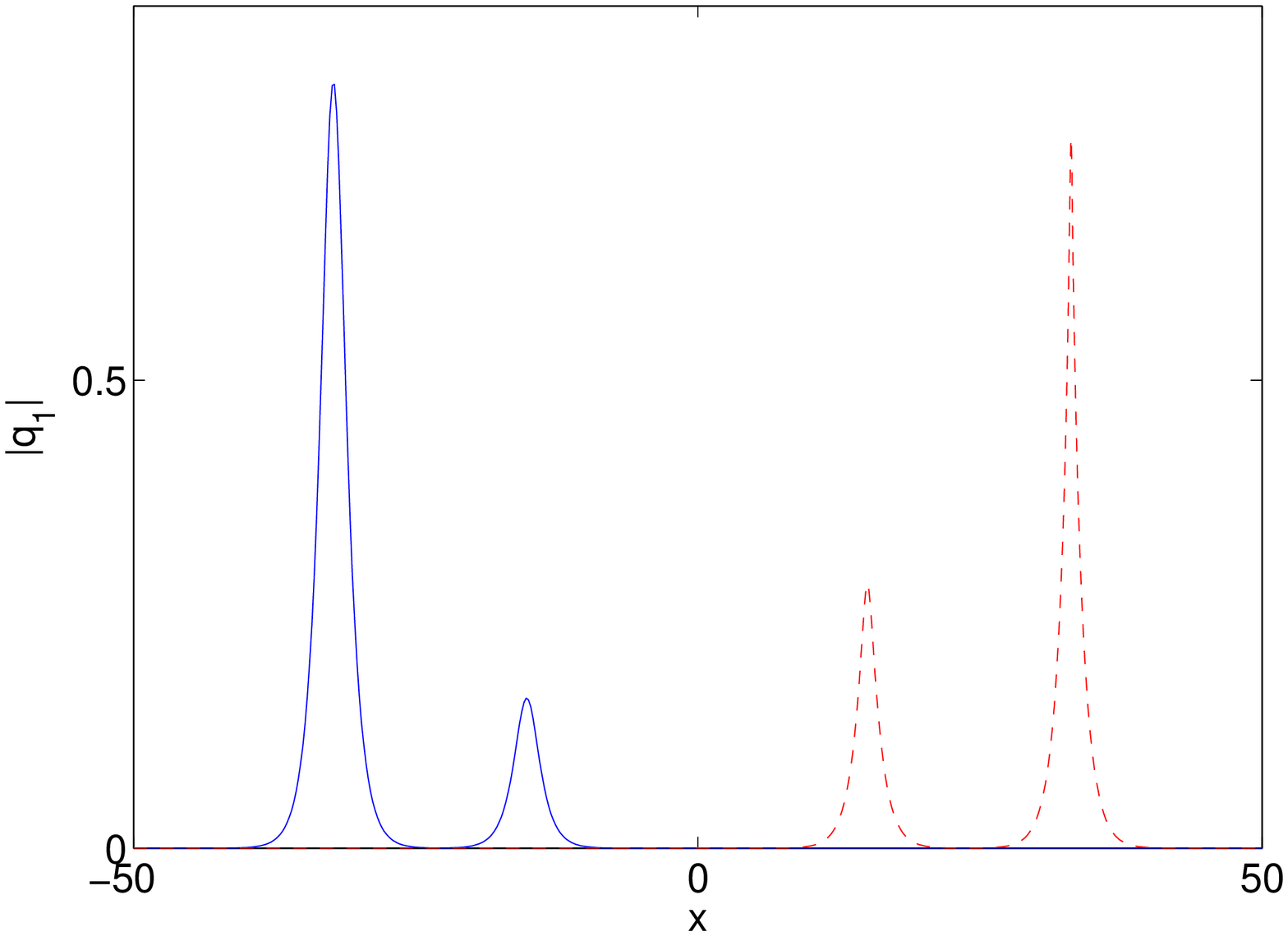}\quad
\includegraphics[scale=0.35]{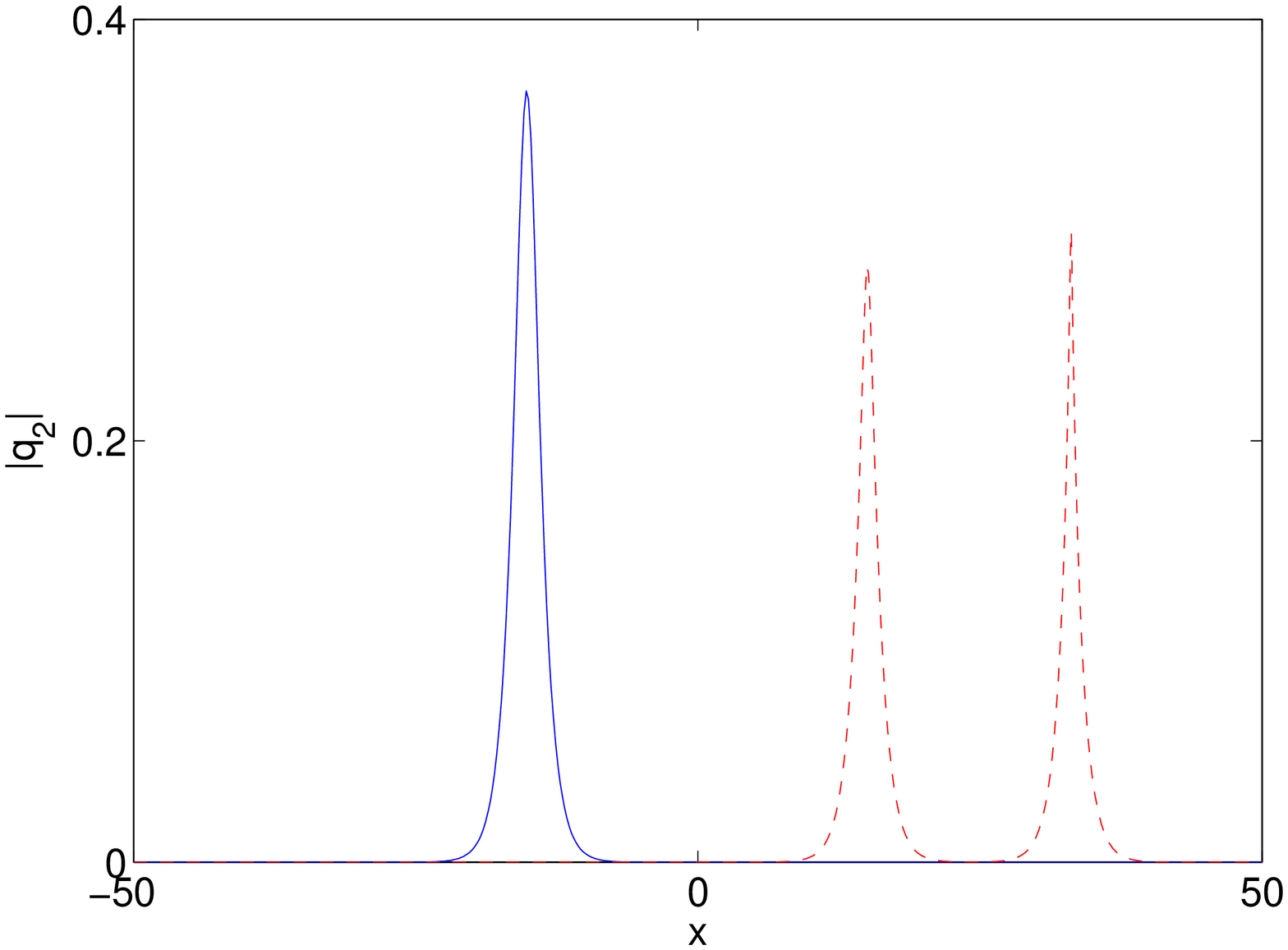}} \kern-0.315\textwidth
\hbox to
\textwidth{\hss(c)\kern6.5em\hss(d)\kern-1.5em} \kern+0.315\textwidth
\caption{Inelastic collision in coupled
complex short pulse equation for $p_{1}=1+1.2{\rm i}$, $p_{2}=1+2{\rm i}$, $\alpha^{(1)}_{1}=\alpha^{(2)}_{1}=1.0$, $\alpha^{(1)}_{2}=0$, $\alpha^{(2)}_{2}=1.0$. (a)-(b): contour plot; (c)-(d): profiles before and after the collision.}
\label{f:inelastic2}
\end{figure}
On the other hand, if we change the parameters as $\alpha^{(1)}_{2}=1.0$, $%
\alpha^{(2)}_{2}=0$, then the energy of one soliton, which are distributed between two components  before the
collision is concentrated into one component $q_2$ after the collision.
 The example is
shown in Fig. 7. 
\begin{figure}[htbp]
\centerline{
\includegraphics[scale=0.35]{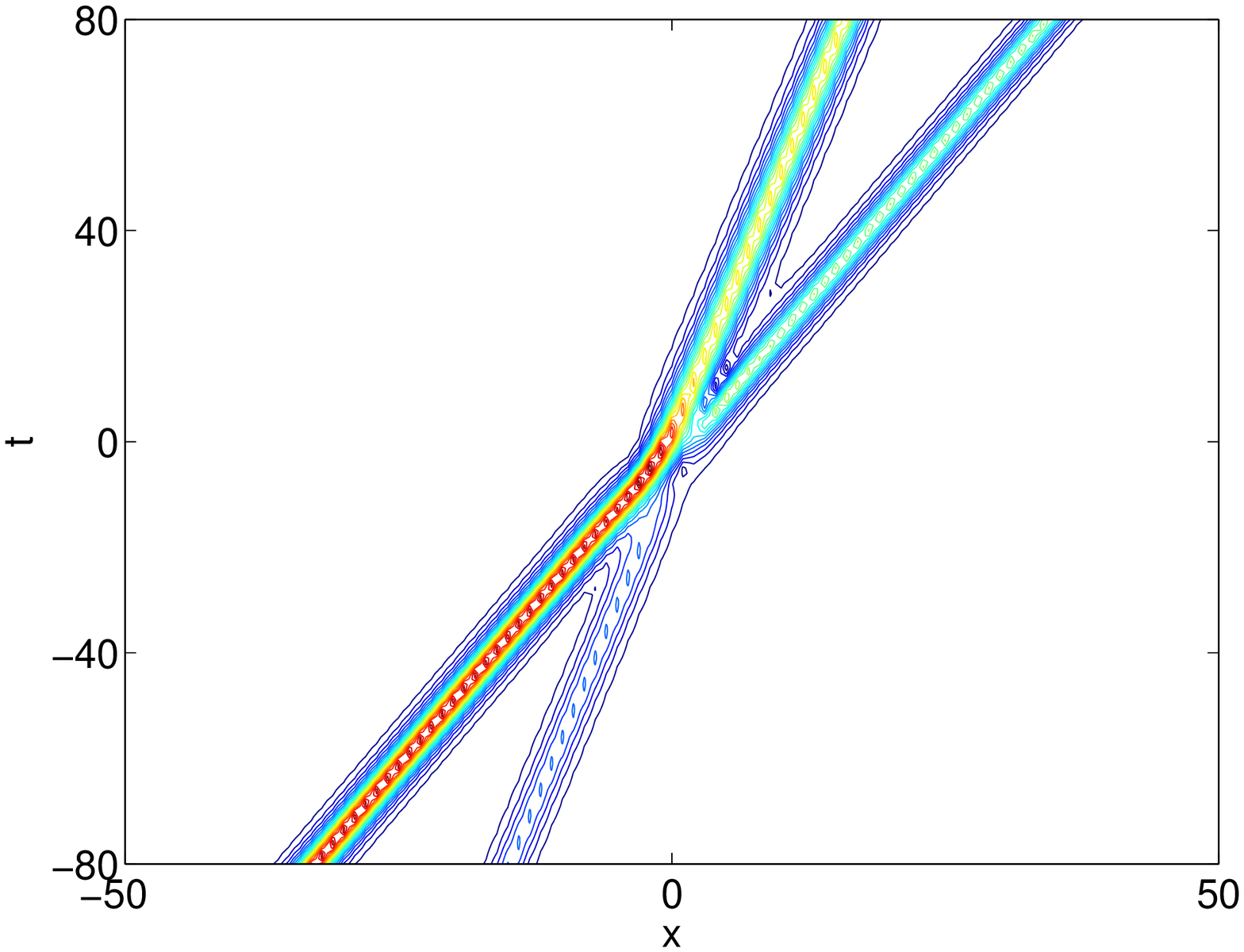}\quad
\includegraphics[scale=0.35]{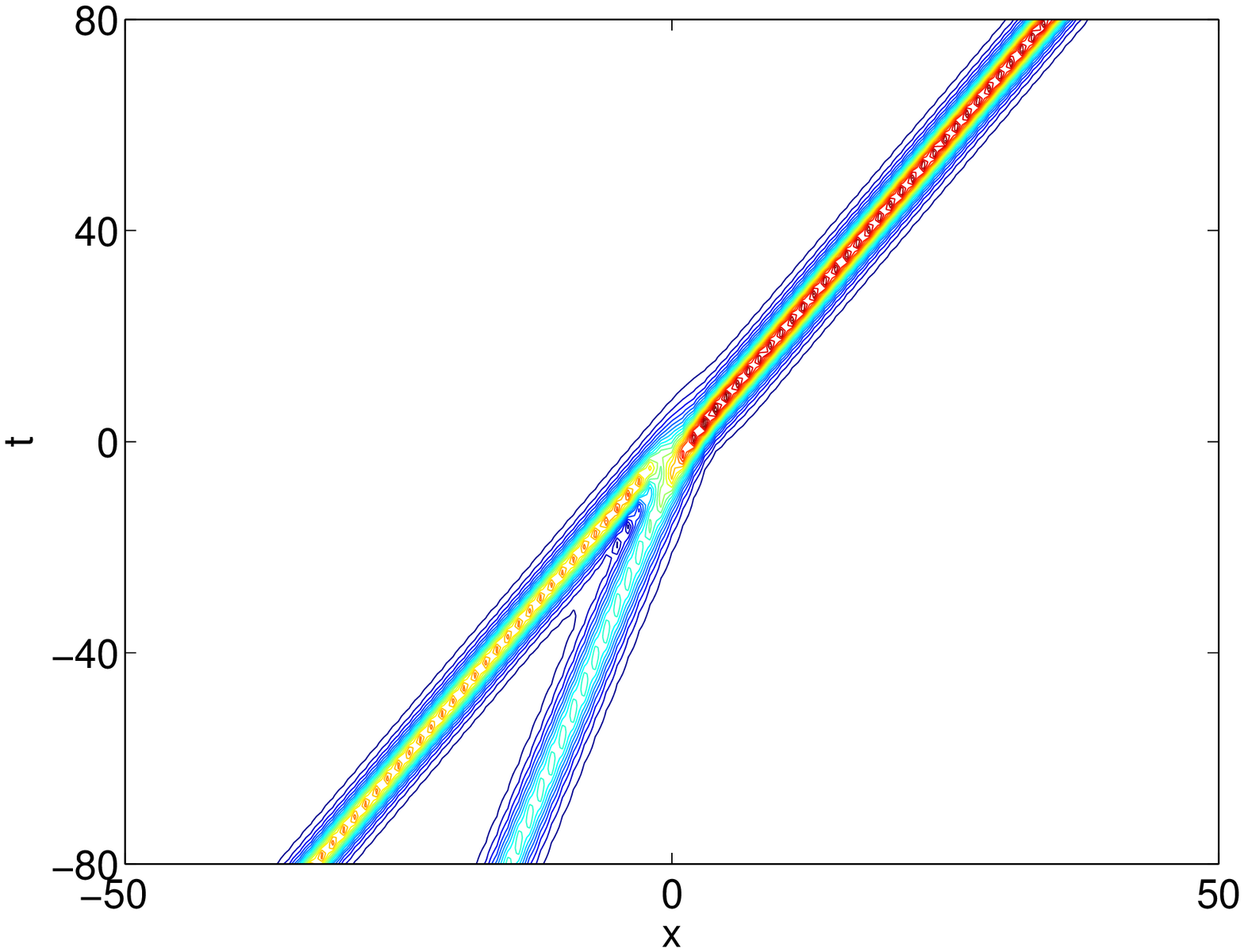}} \kern-0.315\textwidth
\hbox to
\textwidth{\hss(a)\kern6.5em\hss(b)\kern-1.5em} \kern+0.315\textwidth
\centerline{
\includegraphics[scale=0.35]{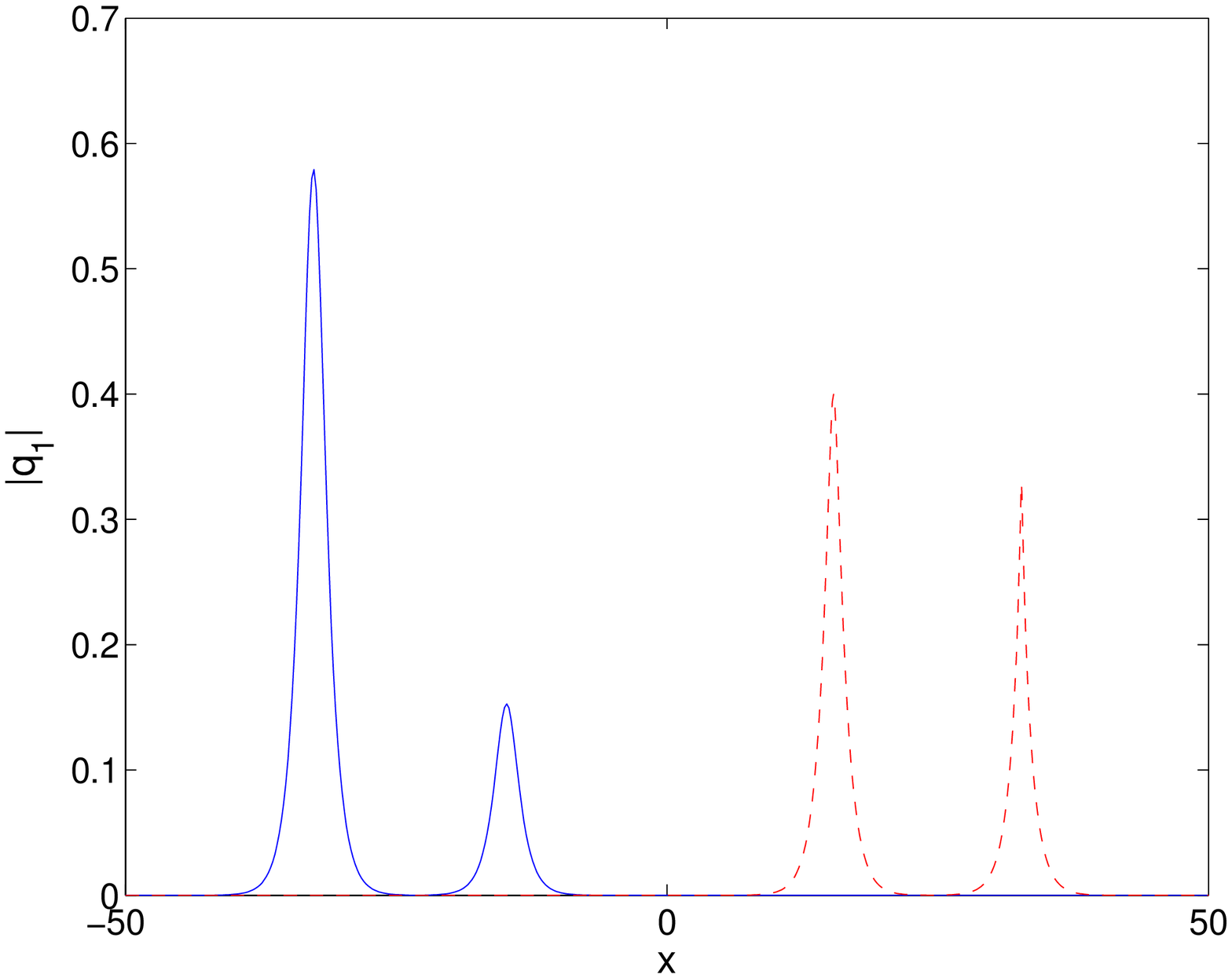}\quad
\includegraphics[scale=0.35]{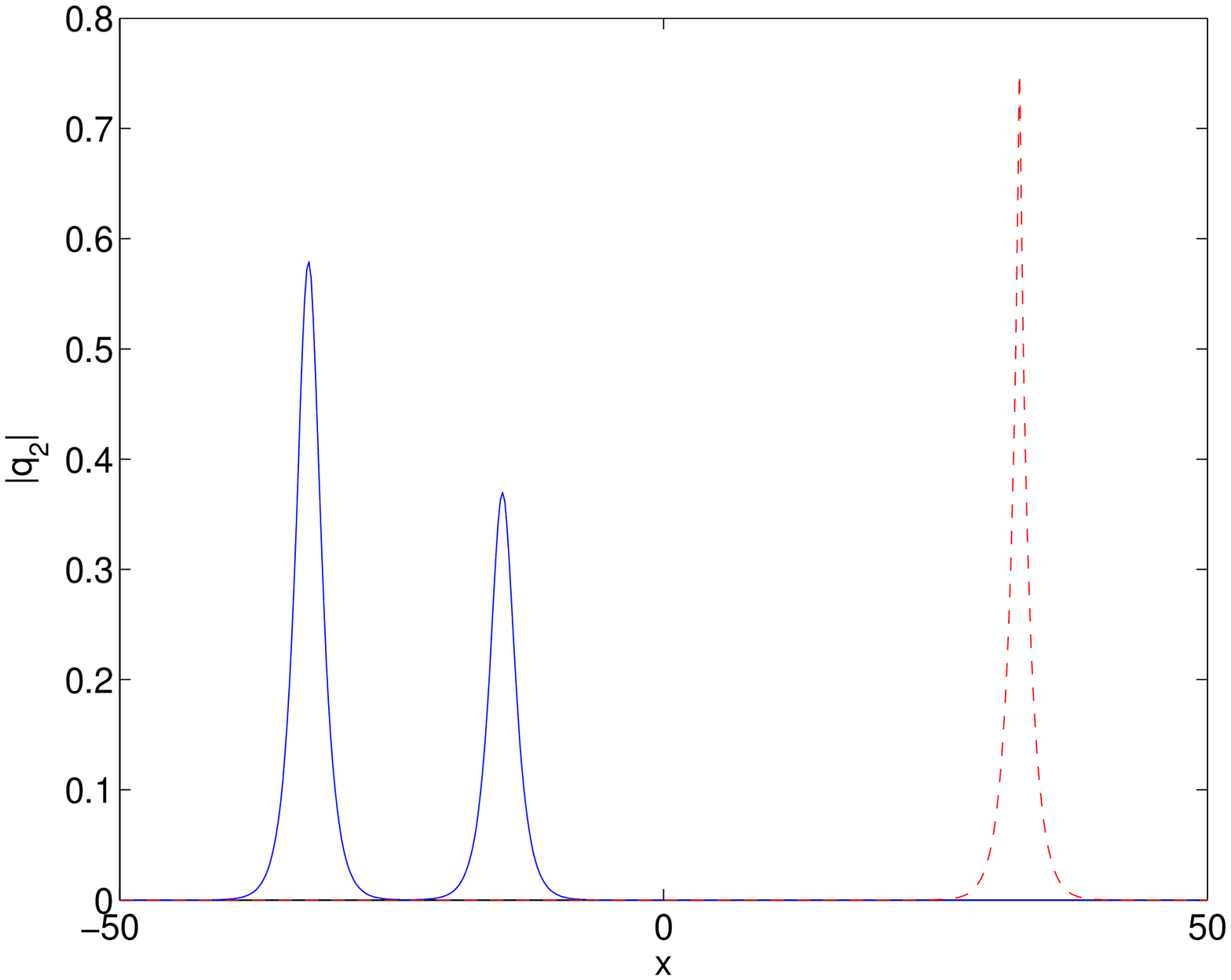}} \kern-0.315\textwidth
\hbox to
\textwidth{\hss(c)\kern6.5em\hss(d)\kern-1.5em} \kern+0.315\textwidth
\caption{Inelastic collision in coupled
complex short pulse equation for $p_{1}=1+1.2\mathrm{i}$, $p_{2}=1+2\mathrm{i%
}$, $\protect\alpha^{(1)}_{1}=\alpha^{(2)}_{1}=1.0$, \ $\alpha^{(1)}_{2}=1.0$, $\alpha^{(2)}_{2}=0$. (a)-(b): contour plot; (c)-(d): profiles before and after the collision.}
\label{f:inelastic3}
\end{figure}
\section{Concluding Remarks}
In this paper, we proposed a complex short pulse equation and its
two-component generalization. Both of the equations can be used to model
the propagation of ultra-short pulses in optical fibers. We have shown their
integrability by finding the Lax pairs and infinite numbers of conservation
laws. Furthermore, multi-soliton solutions are constructed via Hirota's
bilinear method. In particular, one-soliton solution for the CSP equation is
an envelope soliton with a few optical cycles under certain condition, which
perfectly match the requirement for the ultra-short pulses.
The $N$-solution for complex short pulse equation and its
two-component generalization is a benchmark for the study of soliton
interactions in ultra-short pulses propagation in optical fibers. It is expected that these analytical
solutions can be confirmed from experiments.

Similar to our previous results for the integrable discretizations of the short pulse equation \cite{SPE_discrete1}, how to construct integrable discretizations of the CSP and coupled CSP equations and how to
apply them for the numerical simulations is also an interesting topic to be
studied. It is obviously beyond the scope of the present paper, we
are to report the results on this aspect in a forthcoming paper.




\apptitle
\section{}
\appeqn
\textbf{Proof of Theorem 4.2}

\begin{proof}
First we define
\[
(b_j, \bar{\beta}_1)= \bar{\alpha}_j\delta_{\mu,1} \,, \quad (b_j, \bar{\beta%
}_2)= \bar{\alpha}_j\delta_{\mu,2}\,,
\]
where $index(b_j)=\mu$ , then from the fact
\[
\mathrm{Pf}(\bar{a}_j,a_k)= \mathrm{Pf} (a_{N+j},a_{N+k})\,, \mathrm{Pf}(%
\bar{b}_j,b_k)= \mathrm{Pf} (b_{N+j},b_{N+k})\,,
\]
we obtain
\[
\bar{f}=f\,, \quad \bar{g}= \mathrm{Pf} (d_0, \bar{\beta}_1, a_1, \cdots,
a_{2N}, b_1, \cdots, b_{2N})\,.
\]
Since
\[
\frac{\partial} {\partial y} \mathrm{Pf} (a_j,a_k)= (p_j -
p_k)e^{\eta_j+\eta_k} = \mathrm{Pf} (d_0, d_1, a_j,a_k)\,,
\]

\[
\frac{\partial} {\partial s} \mathrm{Pf} (a_j,a_k)= (p^{-1}_k - p^{-1}_j)
e^{\eta_j+\eta_k} = \mathrm{Pf} (d_{-1}, d_0, a_j,a_k)\,,
\]

\[
\frac{\partial^2} {\partial s^2} \mathrm{Pf} (a_j,a_k)= (p^{-2}_k -
p^{-2}_j) e^{\eta_j+\eta_k} = \mathrm{Pf} (d_{-2}, d_0, a_j,a_k)\,,
\]
\[
\frac{\partial^2} {\partial y \partial s}\mathrm{Pf} (a_j,a_k)= (p_jp^{-1}_k
- p_k p^{-1}_j) e^{\eta_j+\eta_k} = \mathrm{Pf} (d_{-1}, d_1, a_i,a_j)\,,
\]
we then have
\[
\frac{\partial f} {\partial y} = \mathrm{Pf} (d_0, d_1, \cdots)\,,
\]

\[
\frac{\partial f} {\partial s} = \mathrm{Pf} (d_{-1}, d_0, \cdots)\,,
\]

\[
\frac{\partial^2 f} {\partial s^2} = \mathrm{Pf} (d_{-2}, d_0, \cdots)\,,
\]

\[
\frac{\partial^2 f} {\partial y \partial s} = \mathrm{Pf} (d_{-1}, d_1,
\cdots)\,.
\]
Here $\mathrm{Pf} (d_0, d_1, a_1, \cdots, a_{2N}, b_1, \cdots, b_{2N})$ is
abbreviated by $\mathrm{Pf} (d_0, d_1, \cdots)$, so as other similar
pfaffians.

Furthermore, it can be shown
\begin{eqnarray*}
&& \frac{\partial g} {\partial y} = \frac{\partial} {\partial y} \left[%
\sum_{j=1}^{2N} (-1)^{j} \mathrm{Pf} (d_0, a_j) \mathrm{Pf} (\beta_1, \cdots
,\hat{a}_j, \cdots)\right] \\
&& =\sum_{j=1}^{2N} (-1)^{j} \left[ \left( {\partial_y} \mathrm{Pf} (d_0,
a_j) \right) \mathrm{Pf} (\beta_1, \cdots ,\hat{a}_j, \cdots) + \mathrm{Pf}
(d_0, a_j) {\partial_y} \mathrm{Pf} (\beta_1, \cdots ,\hat{a}_j, \cdots) %
\right] \\
&& =\sum_{j=1}^{2N} (-1)^{j} \left[ \mathrm{Pf} (d_1, a_j) \mathrm{Pf}
(\beta_1, \cdots ,\hat{a}_j, \cdots) + \mathrm{Pf} (d_0, a_j) \mathrm{Pf}
(\beta_1, d_0, d_1, \cdots ,\hat{a}_j, \cdots) \right] \\
&& = \mathrm{Pf} ( d_1, \beta_1, \cdots)+ \mathrm{Pf} ( d_0, \beta_1, d_0,
d_1, \cdots) \\
&& = \mathrm{Pf} (d_1, \beta_1, \cdots)\,.
\end{eqnarray*}

Here $\hat{a}_j$ means that the index $j$ is omitted. Similarly, we can show
\[
\frac{\partial g} {\partial s} = \mathrm{Pf} (d_{-1}, \beta_1, \cdots)\,,
\]

\begin{eqnarray*}
&& \frac{\partial^2 g} {\partial y \partial s} = \frac{\partial} {\partial y}
\left[\sum_{j=1}^{2N} (-1)^{j} \mathrm{Pf} (d_{-1}, a_j) \mathrm{Pf}
(\beta_1, \cdots ,\hat{a}_j, \cdots)\right] \\
&& =\sum_{j=1}^{2N} (-1)^{j} \left[ \left( {\partial_y} \mathrm{Pf} (d_{-1},
a_j) \right) \mathrm{Pf} (\beta_1, \cdots ,\hat{a}_j, \cdots) + \mathrm{Pf}
(d_{-1}, a_j) {\partial_y} \mathrm{Pf} (\beta_1, \cdots ,\hat{a}_j, \cdots) %
\right] \\
&& =\sum_{j=1}^{2N} (-1)^{j} \left[ \mathrm{Pf} (d_0, a_j) \mathrm{Pf}
(\beta_1, \cdots ,\hat{a}_j, \cdots) + \mathrm{Pf} (d_{-1}, a_j) \mathrm{Pf}
(\beta_1, d_0, d_1, \cdots ,\hat{a}_j, \cdots) \right] \\
&& = \mathrm{Pf} (d_0, \beta_1, \cdots)+ \mathrm{Pf} (d_{-1}, \beta_1, d_0,
d_1, \cdots)\,.
\end{eqnarray*}

An algebraic identity of pfaffian \cite{Hirota}
\begin{eqnarray*}
&& \mathrm{Pf} (d_{-1}, \beta_1, d_0, d_1, \cdots) \mathrm{Pf} (\cdots)=
\mathrm{Pf} (d_{-1}, d_0, \cdots) \mathrm{Pf} (d_1, \beta_1, \cdots) \\
&& \quad - \mathrm{Pf} (d_{-1}, d_1, \cdots) \mathrm{Pf} (d_0, \beta_1,
\cdots) + \mathrm{Pf} (d_{-1}, \beta_1, \cdots) \mathrm{Pf} (d_0, d_1,
\cdots)\,,
\end{eqnarray*}
implies
\[
( {\partial_s} {\partial_y} g-g) \times f = {\partial_s} f \times {\partial_y%
} g - {\partial_s} {\partial_y} f \times g + {\partial_s} g \times {%
\partial_y} f \,.
\]
Therefore, the first bilinear equation is approved.

The second bilinear equation can be proved in the same way by Iwao and
Hirota \cite{IwaoHirota}.
\begin{eqnarray}
&& \frac{\partial^2 f} {\partial s^2} \times 0 - \frac{\partial f} {\partial
s} \frac{\partial f} {\partial s}  \nonumber \\
&& = \mathrm{Pf} (d_{-2}, d_0, \cdots) \mathrm{Pf} (d_{0}, d_0, \cdots) -
\mathrm{Pf} (d_{-1}, d_0, \cdots) \mathrm{Pf} (d_{-1}, d_0, \cdots)
\nonumber \\
&& = \sum_{i=1}^{2N} (-1)^i \mathrm{Pf} (d_{-2}, a_i) \mathrm{Pf} (d_0,
\cdots, \hat{a}_i, \cdots) \sum_{j=1}^{2N} (-1)^j \mathrm{Pf} (d_{0}, a_j)
\mathrm{Pf} (d_0, \cdots, \hat{a}_j, \cdots)  \nonumber \\
&& -\sum_{i=1}^{2N} (-1)^i \mathrm{Pf} (d_{-1}, a_i) \mathrm{Pf} (d_0,
\cdots, \hat{a}_i, \cdots) \sum_{j=1}^{2N} (-1)^j \mathrm{Pf} (d_{-1}, a_j)
\mathrm{Pf} (d_0, \cdots, \hat{a}_j, \cdots)  \nonumber \\
&& =\sum_{i,j=1}^{2N} (-1)^{i+j} \left[ \mathrm{Pf} (d_{-2}, a_i) \mathrm{Pf}
(d_{0}, a_j) -\mathrm{Pf} (d_{-1}, a_i) \mathrm{Pf} (d_{-1}, a_j) \right]
\nonumber \\
&& \quad \times \mathrm{Pf} (d_0, \cdots, \hat{a}_i, \cdots) \mathrm{Pf}
(d_0, \cdots, \hat{a}_j, \cdots)  \nonumber \\
&&=\sum_{i,j=1}^{2N} (-1)^{i+j+1} \left[p_i^{-2} + p_i^{-1}p_j^{-1} \right]
\mathrm{Pf} (a_i, a_j) \mathrm{Pf} (d_0, \cdots, \hat{a}_i, \cdots) \mathrm{%
Pf} (d_0, \cdots, \hat{a}_j, \cdots)  \nonumber
\end{eqnarray}

The summation over the second term within the bracket vanishes due to the
fact that
\begin{eqnarray*}
&& \sum_{i,j=1}^{2N} (-1)^{i+j+1} p_i^{-1}p_j^{-1} \mathrm{Pf} (a_i, a_j)
\mathrm{Pf} (d_0, \cdots, \hat{a}_i, \cdots) \mathrm{Pf} (d_0, \cdots, \hat{a%
}_j, \cdots) \\
&& = \sum_{j,i=1}^{2N} (-1)^{j+i+1} p_j^{-1}p_i^{-1} \mathrm{Pf} (a_j, a_i)
\mathrm{Pf} (d_0, \cdots, \hat{a}_j, \cdots) \mathrm{Pf} (d_0, \cdots, \hat{a%
}_i, \cdots) \\
&& = -\sum_{i,j=1}^{2N} (-1)^{i+j+1} p_i^{-1}p_j^{-1} \mathrm{Pf} (a_i, a_j)
\mathrm{Pf} (d_0, \cdots, \hat{a}_i, \cdots) \mathrm{Pf} (d_0, \cdots, \hat{a%
}_j, \cdots)\,.
\end{eqnarray*}
Therefore,
\begin{eqnarray}
&& - \frac{\partial f} {\partial s} \frac{\partial f} {\partial s}
=\sum_{i,j=1}^{2N} (-1)^{i+j+1} p_i^{-2} \mathrm{Pf} (a_i, a_j) \mathrm{Pf}
(d_0, \cdots, \hat{a}_i, \cdots) \mathrm{Pf} (d_0, \cdots, \hat{a}_j, \cdots)
\nonumber \\
&&=\sum_{i=1}^{2N} (-1)^{i+1} p_i^{-2} \mathrm{Pf} (d_0, \cdots, \hat{a}_i,
\cdots) \left[ \sum_{j=1}^{2N} (-1)^{j} \mathrm{Pf} (a_i, a_j) \mathrm{Pf}
(d_0, \cdots, \hat{a}_j, \cdots) \right]  \nonumber  \label{CSP1_proof5}
\end{eqnarray}
Further, we note that the following identity can be substituted into the
term within bracket
\begin{eqnarray*}
&& \sum_{j=1}^{2N} (-1)^{j} \mathrm{Pf} (a_i, a_j) \mathrm{Pf} (d_0, \cdots,
\hat{a}_j, \cdots) \\
&& = \mathrm{Pf} (d_{0}, a_i) \mathrm{Pf} (\cdots) + (-1)^{i+1} \mathrm{Pf}
(d_0, \cdots, \hat{b}_i, \cdots)\,
\end{eqnarray*}
which is obtained from the expansion of the following vanishing pfaffian $%
\mathrm{Pf} (a_i, d_0, \cdots)$ on $a_i$. Consequently, we have
\begin{eqnarray}
&& - \frac{\partial f} {\partial s} \frac{\partial f} {\partial s} =
\nonumber \\
&& \sum_{i=1}^{2N} (-1)^{i+1} p_i^{-2} \mathrm{Pf} (d_0, \cdots, \hat{a}_i,
\cdots) \left[\mathrm{Pf} (d_{0}, a_i) \mathrm{Pf} (\cdots) + (-1)^{i+1}
\mathrm{Pf} (d_0, \cdots, \hat{b}_i, \cdots)\right]\,,  \nonumber \\
&& = -\mathrm{Pf} (\cdots) \mathrm{Pf} (d_{-2}, d_0, \cdots)+
\sum_{i=1}^{2N} p_i^{-2} \mathrm{Pf} (d_0, \cdots, \hat{a}_i, \cdots)
\mathrm{Pf} (d_0, \cdots, \hat{b}_i, \cdots)\,,
\end{eqnarray}
which can be rewritten as
\begin{equation}
\frac{\partial^2 f} {\partial s^2} f- \frac{\partial f} {\partial s} \frac{%
\partial f} {\partial s}= \sum_{i=1}^{2N} p_i^{-2} \mathrm{Pf} (d_0, \cdots,
\hat{a}_i, \cdots) \mathrm{Pf} (d_0, \cdots, \hat{b}_i, \cdots)\,.
\end{equation}

Now, we work on the r.h.s of the second bilinear equation.
\begin{eqnarray}  \label{CSP1_proof1}
&& \frac 12 |g|^2 = \frac 12 \mathrm{Pf} (d_0,\beta_1, \cdots) \mathrm{Pf}
(d_0, \bar{\beta}_1, \cdots)  \nonumber \\
&& = \frac 12 \sum_{i,j}^{2N} (-1)^{i+j}\mathrm{Pf} (b_i, \beta_1) \mathrm{Pf%
} (d_0, \cdots,\hat{b}_i, \cdots) \mathrm{Pf} (b_j,\bar{\beta}_1) \mathrm{Pf}
(d_0, \cdots,\hat{b}_j, \cdots)  \nonumber \\
&& = \frac 14 \sum_{i,j}^{2N} (-1)^{i+j} (\alpha_i \bar{\alpha}_j) \mathrm{Pf%
} (d_0, \cdots,\hat{b}_i, \cdots) \mathrm{Pf} (d_0, \cdots,\hat{b}_j, \cdots)
\nonumber \\
&& = \sum_{i,j}^{2N} (-1)^{i+j} \left(p_i^{-2}-p_{j}^{-2}\right) \mathrm{Pf}
(b_i,b_j)\mathrm{Pf}(d_0, \cdots,\hat{b}_i, \cdots) \mathrm{Pf} (d_0, \cdots,%
\hat{b}_j, \cdots)  \nonumber \\
\end{eqnarray}
Next, the expansion of the vanishing pfaffian $\mathrm{Pf} (b_i, d_0,
\cdots) $ on $b_i$ yields
\begin{equation}
\sum_{j=1}^{2N} (-1)^{i+j}\mathrm{Pf} (b_i, b_j) \mathrm{Pf} (d_0, \cdots,
\hat{b}_j, \cdots) = \mathrm{Pf} (d_0, \cdots,\hat{a}_i, \cdots)\,,
\end{equation}
which subsequently leads to
\begin{eqnarray}
&& \sum_{i,j}^{2N} (-1)^{i+j} p_i^{-2} \mathrm{Pf} (b_i,b_j)\mathrm{Pf}(d_0,
\cdots,\hat{b}_i, \cdots) \mathrm{Pf} (d_0, \cdots,\hat{b}_j, \cdots)
\nonumber \\
&& = \sum_{i}^{2N} p_i^{-2} \mathrm{Pf} (d_0, \cdots,\hat{a}_i, \cdots)
\mathrm{Pf} (d_0, \cdots,\hat{b}_i, \cdots)\,.  \label{CSP1_proof2}
\end{eqnarray}
Similarly, we can show that
\begin{eqnarray}
&& -\sum_{i,j}^{2N} (-1)^{i+j} p_j^{-2} \mathrm{Pf} (b_i,b_j)\mathrm{Pf}%
(d_0, \cdots,\hat{b}_i, \cdots) \mathrm{Pf} (d_0, \cdots,\hat{b}_j, \cdots)
\nonumber \\
&& = \sum_{j}^{2N} p_j^{-2} \mathrm{Pf} (d_0, \cdots,\hat{a}_j, \cdots)
\mathrm{Pf} (d_0, \cdots,\hat{b}_j, \cdots)\,.  \label{CSP1_proof3}
\end{eqnarray}
Substituting Eqs. (\ref{CSP1_proof2})--(\ref{CSP1_proof2}) into Eq. (\ref%
{CSP1_proof1}), we arrive at
\begin{equation}  \label{CSP1_proof4}
\frac 12 |g|^2= 2\sum_{i}^{2N} p_i^{-2} \mathrm{Pf} (d_0, \cdots,\hat{a}_i,
\cdots) \mathrm{Pf} (d_0, \cdots,\hat{b}_i, \cdots)\,.
\end{equation}
Consequently we have
\begin{equation}
2\frac{\partial^2 f} {\partial s^2} f- 2\frac{\partial f} {\partial s} \frac{%
\partial f} {\partial s}= \frac 12 |g|^2\,,
\end{equation}
which is nothing but the second bilinear equation. Therefore, the proof is
complete.
\end{proof}

\textbf{The proof of Theorem 4.6}

\begin{proof}
The proof of the first bilinear equation can be done exactly in the same way
as for the complex short pulse equation. In what follows, we prove the
second equation by starting from the r.h.s of this equation. Because
\[
\bar{g}_1= \mathrm{Pf} (d_0, \bar{\beta}_1, a_1, \cdots, a_{2N}, b_1,
\cdots, b_{2N})\,,
\]
\[
\bar{g}_2= \mathrm{Pf} (d_0, \bar{\beta}_2, a_1, \cdots, a_{2N}, b_1,
\cdots, b_{2N})\,,
\]
the r.h.s of the bilinear equation turns out to be
\begin{eqnarray}  \label{CCSP1_proof1}
&& \frac 12 \left( g_{1} \bar{g}_{1} + g_{2} \bar{g}_{2} \right)  \nonumber
\\
&& = \frac 12 \sum^2_{k=1} \sum_{i,j}^{2N} (-1)^{i+j}\mathrm{Pf} (b_i,
\beta_k) \mathrm{Pf} (d_0, \cdots,\hat{b}_i, \cdots) \mathrm{Pf} (b_j,\bar{%
\beta}_k) \mathrm{Pf} (d_0, \cdots,\hat{b}_j, \cdots)  \nonumber \\
&& = \frac 14 \sum_{i,j}^{2N} (-1)^{i+j} \sum^2_{k=1}(\alpha^{(k)}_i \bar{%
\alpha}^{(k)}_j) \mathrm{Pf} (d_0, \cdots,\hat{b}_i, \cdots) \mathrm{Pf}
(d_0, \cdots,\hat{b}_j, \cdots)  \nonumber \\
&& = \sum_{i,j}^{2N} (-1)^{i+j} \left(p_i^{-2}-p_{j}^{-2}\right) \mathrm{Pf}
(b_i,b_j)\mathrm{Pf}(d_0, \cdots,\hat{b}_i, \cdots) \mathrm{Pf} (d_0, \cdots,%
\hat{b}_j, \cdots)  \nonumber \\
\end{eqnarray}
Similar to the complex short pulse equation, we can show 

\begin{equation}  \label{CCSP1_proof4}
\frac 12 \left(|g_{1}|^2 + |g_{2}|^2 \right) = 2\sum_{i}^{2N} p_i^{-2}
\mathrm{Pf} (d_0, \cdots,\hat{a}_i, \cdots) \mathrm{Pf} (d_0, \cdots,\hat{b}%
_i, \cdots)\,.
\end{equation}
Regarding the r.h.s of the bilinear equation, exactly the same as the proof
of the Theorem 4.2, we have
\begin{equation}  \label{CCSP1_proof5}
\frac{\partial^2 f} {\partial s^2} f- \frac{\partial f} {\partial s} \frac{%
\partial f} {\partial s} =\sum_{i}^{2N} p_i^{-2} \mathrm{Pf} (d_0, \cdots,%
\hat{a}_i, \cdots) \mathrm{Pf} (d_0, \cdots,\hat{b}_i, \cdots)\,.
\end{equation}
Therefore the second bilinear equation is proved.
\end{proof}

\thank
\section{}
The author is grateful for the useful discussions with Dr. Yasuhiro Ohta
(Kobe University) and Dr. Kenichi Maruno at Waseda University. This work is partially supported by the National Natural Science Foundation of China (No. 11428102).



\end{document}